  \providecommand\BibTeX{{%
    \normalfont B\kern-0.5em{\scshape i\kern-0.25em b}\kern-0.8em\TeX}}}
\definecolor{ocre}{RGB}{243,102,25} 
\newcommand{\sayan}[1]{\textcolor{blue}{#1}}
\newcommand{\reals}{{\mathbb{R}}} 
\newcommand{\nnreals}{{\mathbb{R}^{\geq 0}}}
\newcommand{\dom}{\relax\ifmmode {\mathit{dom}} \else ${\sf dom}$\fi}
\newcommand{\Reach}{\mathit{Reach}} 
\newcommand{\init}{\mathit{init}} 
\newcommand{\coverstack}{\mathit{traversalstack}}
\newcommand{\unsafeset}{{\mathit{O}}} 
\newcommand{\node}{\mathit{node}}
\newcommand{\nodechildren}{\mathit{children}}
\newcommand{\reachset}{\mathit{Reach}}
\newcommand{\safe}{\mathit{safe}}
\newcommand{\unsafe}{\mathit{unsafe}}
\newcommand{\unknown}{\mathit{unknown}}
\newcommand{\refine}{\mathit{refine}}
\newcommand{\state}{{x}}
\newcommand{\ourtool}{\sf{SceneChecker}}
\newcommand{\ourtacastool}{\sf{CacheReach}}
\newcommand{\initset}{\Theta}
\newcommand{\initpar}{s_{\mathit{init}}}
\newcommand{\finitpar}{p_{\mathit{init}}}
\newcommand{\initparv}{p_{\mathit{init,v}}}
\newcommand{\finitparv}{p_{\mathit{init,v}}}
\newcommand{\edgeset}{E}
\newcommand{\edgeinst}{e}
\newcommand{\guard}{\mathit{guard}}
\newcommand{\reset}{\mathit{reset}}
\newcommand{\stateset}{X}
\newcommand{\inputset}{{\bf U}}
\newcommand{\parset}{S}
\newcommand{\fparset}{P}
\newcommand{\goalset}{{\bf G}}
\newcommand{\ha}{{H}}
\newcommand{\hav}{A_v}
\newcommand{\haref}{H_v'}
\newcommand{\parinst}{s}
\newcommand{\fparinst}{p}
\newcommand{\fparinstvf}{p_{v,\mathit{1}}}
\newcommand{\fparinstvs}{p_{v,\mathit{2}}}
\newcommand{\parinstvf}{s_{v,\mathit{1}}}
\newcommand{\parinstvs}{s_{v,\mathit{2}}}
\newcommand{\stateinst}{x}
\newcommand{\inputinst}{u}
\newcommand{\lstate}{\mathit{lstate}}
\newcommand{\fstate}{\mathit{fstate}}
\newcommand{\roads}{{\bf R}}
\newcommand{\roadinst}{{r}}
\newcommand{\src}{\mathit{src}}
\newcommand{\dest}{\mathit{dest}}
\newcommand{\initmode}{s_{\mathit{init}}}
\newcommand{\finitmode}{p_{\mathit{init}}}
\newcommand{\initsetv}{X_{\mathit{init,v}}}
\newcommand{\initmodev}{s_{\mathit{init,v}}}
\newcommand{\fsr}{\mathcal{R}_{\mathit{rv}}}
\newcommand{\fsrf}{\mathcal{R}_{\mathit{1}}}
\newcommand{\fsrs}{\mathcal{R}_{\mathit{2}}}
\newcommand{\rv}{\mathit{rv}}
\newcommand{\newrv}{\mathit{rv}'}
\newcommand{\permodedict}{\mathit{Cache}}
\newcommand{\computereachset}{{\sf computeReachset}}
\newcommand{\splitmode}{{\sf splitMode}}
\newcommand{\cache}{\mathit{cache}}
\newcommand{\True}{\mathit{True}}
\newcommand{\pop}{\mathit{pop}}
\newcommand{\verify}{{\sf verify}}
\newcommand{\result}{\mathit{result}}
\newcommand{\fsrref}{\mathcal{R}_{\mathit{rv}'}}
\newcommand{\fixedpoint}{\mathit{fixedpoint}}
\newcommand{\issafe}{\mathit{issafe}}
\newcommand{\plan}{{G}}
\newcommand{\vertices}{V}
\newcommand{\segments}{S}
\newcommand{\timebound}{\mathit{tbound}}
\newcommand{\transpolyvirtual}{{\sf polyVir}}
\newcommand{\transvirtualpoly}{{{\sf virPoly}}}
\newcommand{\transmodevirtual}{{\sf modeVir}}
\newcommand{\abstractfunc}{{\sf abstract}}
\newcommand{\subtree}{{\sf subTree}}
\newcommand{\restartmode}{\mathit{restartp}}
\newcommand{\currinitset}{\mathit{initset}}
\newcommand{\agent}{\mathcal{A}}
\newcommand\nocaption{%
    \renewcommand\p@subfigure{}
    \renewcommand\thesubfigure{\thefigure\alph{subfigure}}
}
\newsavebox\IBoxA \newsavebox\IBoxB \newlength\IHeight
\newcommand\TwoFig[6]{
	\sbox\IBoxA{\includegraphics[width=0.45\textwidth]{#1}}
	\sbox\IBoxB{\includegraphics[width=0.45\textwidth]{#4}}%
	\ifdim\ht\IBoxA>\ht\IBoxB
	\setlength\IHeight{\ht\IBoxB}%
	\else\setlength\IHeight{\ht\IBoxA}\fi
	\begin{figure}[!htb]
		\minipage[t]{0.45\textwidth}\centering
		\includegraphics[height=\IHeight]{#1}
		\caption{#2}\label{#3}
		\endminipage\hfill
		\minipage[t]{0.45\textwidth}\centering
		\includegraphics[height=\IHeight]{#4}
		\caption{#5}\label{#6}
		\endminipage 
	\end{figure}%
}
\newcommandtwoopt\Textbox[5][2.5cm][2cm]{%
	\begin{tikzpicture}[remember picture,overlay]
	\coordinate (aux) at ([xshift=#1]#4);
	\node[inner ysep=3pt,yshift=0.6ex,draw=gray,thick,
	fit=(#3) (aux),baseline] 
	(box) {};
	\node[text width=#2,anchor=north east,
	font=\sffamily\footnotesize,align=right] 
	at (box.north east) {#5};
	\end{tikzpicture}%
}
\begin{document}
\title{$\ourtool$: Boosting Scenario Verification using Symmetry Abstractions}
%
%
\author{Hussein Sibai \and Yangge Li \and Sayan Mitra}
\institute{ University of Illinois at Urbana-Champaign \\ Coordinated Science Laboratory\\
\email{\{sibai2,li213,mitras\}@illinois.edu}}
%
%
%
\maketitle              
\begin{abstract}
We present $\ourtool$, a tool for verifying scenarios involving vehicles executing complex plans in large cluttered workspaces. 
$\ourtool$ converts the scenario verification problem to a standard hybrid system verification problem, and solves it effectively by exploiting structural properties in the plan and the vehicle dynamics.  $\ourtool$ uses symmetry abstractions,  a novel refinement algorithm, and importantly, is built to boost the performance of any existing reachability analysis tool as a plug-in subroutine.
We evaluated $\ourtool$ on several scenarios involving ground and aerial vehicles with nonlinear dynamics and neural network controllers, employing different kinds of symmetries, using different reachability subroutines, and following plans with hundreds of waypoints in complex workspaces. 
Compared to two leading tools, DryVR and Flow*, $\ourtool$ shows  20$\times$ speedup in verification time, even while using those very tools as reachability  subroutines.

\keywords{Hybrid systems  \and Safety verification \and Symmetry.}
\end{abstract}
\section{Introduction}
\label{sec:intro} 
Remarkable progress has been made in safety verification of hybrid and cyber-physical systems in the last decade~\cite{Spaceex,bak2017hylaa,flow,DuggiralaFM015,c2e2,DryVR,10.1145/3302504.3313351,NNV,ivanov2019verisig}. The methods and tools developed have been applied to check safety of aerospace, medical, and autonomous vehicle control systems~\cite{flow,DuggiralaFM015,Althoff2015a,NimaHSCC,FanQM0D16}. The next  barrier in making these techniques usable for more complex applications is to deal with what is colloquially called the {\em scenario verification problem}.
A key part of the scenario verification problem is to check that  a vehicle or an agent can execute a plan through a complex environment. 
A  planning algorithm (e.g., probabilistic roadmaps~\cite{PRM}  and RRT~\cite{Lavalle98rapidly-exploringrandom}) generates a set of possible paths avoiding obstacles, but only considering the geometry of the scenario, not the dynamics. The  verification task has to ensure that the plan can indeed be safely executed by the vehicle with all the dynamic constraints and the state estimation uncertainties. 
Indeed, one can view a scenario as a hybrid automaton  with the modes defined by the segments of the planner, but this  leads to massive models. Encoding such  automata in existing tools presents some practical hurdles. More importantly, analyzing such models is challenging as the  over-approximation errors and the analysis times  grow rapidly with  the number of transitions. At the same time, such large hybrid  verification problems also have lots of repetitions and symmetries, which suggest  new opportunities. 

We present $\ourtool$, a tool that implements a symmetry abstraction-refinement algorithm for efficient scenario verification. 
Symmetry abstractions significantly reduce the number of modes and edges of an automaton $\ha$ by grouping all modes that share symmetric continuous dynamics~\cite{sibai-tac-2020}. 
$\ourtool$ implements a novel refinement algorithm for symmetry abstractions and is able to use any existing reachability analysis tool  as a subroutine.
Our current implementation comes with plug-ins for using Flow*~\cite{flow} and DryVR~\cite{DryVR}.
$\ourtool$'s verification algorithm is sound, i.e., if it returns  $\safe$, then the reachset of $\ha$ indeed does not intersect the unsafe set. The algorithm is  lossless in the sense that if one can  prove safety without using  abstraction, then $\ourtool$ can also prove safety via abstraction-refinement, and typically a lot faster.

$\ourtool$ offers an easy interface to specify plans, agent dynamics, obstacles, initial uncertainty, and symmetry maps.
$\ourtool$ checks if a fixed point has been reached after each call to the reachability subroutine, avoiding repeating computations.  
First, $\ourtool$ represents the input scenario as a hybrid automaton $\ha$ where modes are defined by the plan's segments. It uses the symmetry maps provided by the user to construct an abstract automaton $\ha_v$.
$\ha_v$ represents another scenario with fewer segments, each representing a group of symmetric segments in $\ha$. 
A side effect of the abstraction is that
upon reaching waypoints in $\ha_v$, the agent's state resets non-deterministically to a set of possible states. For example, in the case of rotation and translation invariance, the abstract scenario would have a single segment for any group of segments with a unique length in the original scenario. $\ourtool$ refines $\ha_v$ by splitting one of its modes to two modes. That corresponds to representing a  group of symmetric segments with one more segment in the abstract scenario, capturing more accurately the original scenario\footnote{ A figure showing the architecture of $\ourtool$ can be found in Appendix~\ref{sec:toolarch}.}. 

We evaluated $\ourtool$ on several scenarios where car and quadrotor agents with nonlinear dynamics follow plans to reach several destinations in 2D and 3D workspaces with hundreds of waypoints and polytopic obstacles.
We considered different symmetries (translation and rotation invariance) and controllers (Proportional-Derivative (PD) and Neural Networks (NN)).
We compared the verification time of $\ourtool$ with DryVR and Flow* as reachability subroutines against  Flow* and DryVR
as standalone tools. $\ourtool$ is faster than both tools in all scenarios considered, achieving up to 20$\times$ speedup in verification time  (Table~\ref{tab:other_tool}). In certain scenarios where Flow* timed out (executing for more than 120 minutes), $\ourtool$ is able to complete verification in as fast as 12 minutes using Flow* as a subroutine. $\ourtool$ when using abstraction-refinement achieved 13$\times$ speedup in verification time over not using abstraction-refinement in scenarios with the NN-controlled quadrotor (Section~\ref{sec:ex}).




\paragraph{Related work}
The idea of using symmetries to accelerate verification  has been exploited in a number of contexts such as probabilistic models~\cite{KwiatkowskaNP06,AntunaACE15}, automata~\cite{EmersonS93,ClarkeJ95}, distributed architectures~\cite{parameterizedsynthesis}, and hardware~\cite{Makai_Barett_hardwaresymmetry,Pandey-Bryant-symmetry-hardware-1999,FPGAsymmetry2008}. Some symmetry utilization algorithms are implemented in Mur$\phi$~\cite{IpDill93} and  Uppaal~\cite{Hendriks04addingsymmetry}.

In our context of cyber-physical systems, 
Bak et al.~\cite{BakSym2015} suggested using symmetry maps, called {\em reachability reduction transformations}, to transform reachsets to symmetric reachsets for continuous dynamical systems modeling non-interacting vehicles. 
Maidens et al. \cite{MaidensSym} proposed a symmetry-based dimensionality reduction method for backward reachable set computations for discrete dynamical systems. Majumdar et al. \cite{Majumdar2017} proposed a safe motion planning algorithm that computes a family of reachsets offline and composes them online using symmetry. Bujorianu et al.~\cite{Bujorianu_Katoen_2008} presented a symmetry-based theory to reduce stochastic hybrid systems for faster reachability analysis and discussed the challenges of designing symmetry reduction techniques across mode transitions. 

In a more closely related research, Sibai et al. \cite{sibai-atva-2019} presented a modified version of DryVR that utilizes symmetry to cache reachsets aiming to accelerate simulation-based safety verification of continuous dynamical systems. 
The related tool $\ourtacastool$  implements a hybrid system verification algorithm that uses symmetry to accelerate reachability analysis \cite{sibai-tacas-2020}. 
$\ourtacastool$ caches and shares computed reachsets between different modes of non-interacting agents using symmetry. 
$\ourtool$ is based on the  theory of symmetry abstractions of hybrid automata presented in~\cite{sibai-tac-2020}. 
They suggested computing the reachset of the abstract automaton instead of the concrete one then transform it to the concrete reachset using symmetry maps to accelerate verification.
$\ourtool$ is built based on this line of work with significant algorithmic and engineering  improvements. 
In addition to the abstraction of \cite{sibai-tac-2020}, $\ourtool$ 1) maps the unsafe set to an abstract unsafe set and verifies the abstract automaton instead of the concrete one
and 
2) decreases the over-approximation error of the abstraction through refinement.
$\ourtool$ 
does not cache reachsets
and thus saves cache-access and reachset-transformation times and does not incur over-approximation errors due to caching that $\ourtacastool$ suffers from \cite{sibai-tacas-2020}. 
At the implementation level, $\ourtool$ accepts plans that are general directed graphs and polytopic unsafe sets while $\ourtacastool$ accepts only single-path plans and hyperrectangle unsafe sets. We show more than $30\times$ speedup in verification time while having more accurate verification results when comparing $\ourtool$ against $\ourtacastool$ (Table~\ref{tab:other_tool} in Section~\ref{sec:ex}).


\section{Specifying Scenarios in $\ourtool$}
\label{sec:problem}


A scenario verification problem is specified by a set of obstacles, a plan, and an agent that is supposed to execute the plan without running into the obstacles. 
For ground and air vehicles, for example, the agent moves in a subset of the 2D or the 3D Euclidean space called the {\em workspace\/}. 
 A {\em plan\/} is a directed graph $\plan = \langle \vertices, \segments \rangle$ with vertices $\vertices$ in the workspace  called {\em waypoints\/} and edges $\segments$ called {\em segments}\footnote{\scriptsize We introduce this redundant nomenclature because later we will reserve the term edges to talk about mode transitions in hybrid automata. We use waypoints instead of vertices as a more natural term for points that vehicles have to follow.}.
A general graph allows for nondeterministic and contingency planning. 

%

An {\em agent} is a control system that can follow waypoints. Let the state space of the agent be $\stateset$ and  $\Theta \subseteq \stateset$ be the uncertain initial set. 
Let $\initpar$ be the initial segment in  $\plan$ that the agent has to follow.
From any state $x\in \stateset$, the agent follows a segment $s\in S$ by moving along a {\em trajectory}. 
A trajectory  is 
 a  function  $\xi:X\times S \times \nnreals \rightarrow X$ that meets certain dynamical constraints of the vehicle. 
Dynamics are either specified by ordinary differential equations (ODE) or by a black-box simulator.  For ODE models, $\xi$ is a solution of an equation of the form:
$\frac{d \xi}{dt}(\stateinst, \parinst, t) = f(\xi(\stateinst, \parinst, t),\parinst)$,
for any $t \in \nnreals$ and $\xi(\stateinst, \parinst, 0) = \stateinst$, where $f: \stateset \times \parset \rightarrow \stateset$ is Lipschitz continuous in the first argument. 
Note that the trajectories  only depend on the segment the agent is following (and  not on the full plan $\plan$). We denote by   $\xi.\fstate$, $\xi.\lstate$, and $\xi.\dom$ the initial and last states and the time domain of the time bounded trajectory $\xi$, respectively.

We can view the obstacles near each segment as sets of unsafe states,  $\unsafeset: \parset \rightarrow 2^\stateset$.
The map $\timebound: \parset \rightarrow \nnreals$ determines the maximum time the agent should spend in following any segment. 
For any pair of consecutive segments $(\parinst,\parinst')$, i.e. sharing a common waypoint in $\plan$, $\guard((\parinst,\parinst'))$ defines the set of states (a hyperrectangle around a waypoint) at which the agent is allowed 
to transition from following $\parinst$ to following $\parinst'$.


\begin{tcolorbox}
\paragraph{${\sf Scenario}$ ${\sf JSON}$ ${\sf file}$} is the first of the two user inputs. It specifies the scenario: $\Theta$ as a hyperrectangle; $S$ as a list of lists each representing two waypoints; $\guard$ as a list of hyperrectangles; $\timebound$ as a list of floats; and $\unsafeset$ as a list of polytopes. 
\\
\paragraph{${\sf Output}$}of $\ourtool$ is the scenario verification result  ($\safe$ or $\unknown$) and a number of useful performance  metrics, such as the number of mode-splits, number of  reachability calls, reachsets computation time, and total  time. $\ourtool$ can also visualize the various computed reachsets.
\end{tcolorbox}

\section{Transforming Scenarios to Hybrid Automata}
\label{sec:scene_to_automaton}
The input scenario is first represented as a hybrid automaton by a ${\sf Hybrid}$ ${\sf constructor}$. This constructor is a Python function that parses the ${\sf Scenario}$ ${\sf file}$  and constructs the data structures to store the scenario's hybrid automaton components. In what follows, we describe the constructed automaton informally. In our current implementation, sets are represented either as hyper-rectangles or as polytopes using the Tulip Polytope Library\footnote{\scriptsize \url{https://pypi.org/project/polytope/}}.

\paragraph{Scenario as a hybrid automaton} 
A hybrid automaton has a set of {\em modes\/} (or discrete states) and a set of continuous states. The evolution of the continuous states in each mode is specified by a set of trajectories and the transition across the modes are specified by $\guard$ and $\reset$ maps. 
The agent following a plan in a workspace can be naturally modeled as a hybrid automaton $\ha$,
where $\parinst_\init$ and $\Theta$ are 
its initial mode and set of states.

Each segment $\parinst \in \parset$ of the plan $\plan$ defines a {\em mode} of $\ha$. 
The set of edges $\edgeset \subseteq \parset \times \parset$ of $\ha$ is defined as pairs of consecutive segments in $\plan$.
For an edge $\edgeinst \in \edgeset$, $\guard(\edgeinst)$ is the same as that of $\plan$. 
The $\reset$ map of $\ha$ is the identity map.
%
We will see in Section~\ref{sec:symmetry_abstractions} that abstract automata will have nontrivial reset maps.

\paragraph{Verification problem}
An {\em execution} of length $k$ is a sequence $\sigma := (\xi_0, \parinst_0), \ldots, (\xi_k,\parinst_k)$. It models the behavior of the agent following a particular path in the plan $G$. An execution $\sigma$ must satisfy:
1) $\xi_0.\fstate \in \initset$ and $\parinst_0 =\parinst_\init$,
for each $i\in \{0,\ldots, k-1\}$,
2) $(\parinst_i,\parinst_{i+1}) \in \edgeset$, 
3) $\xi_i.\lstate \in \guard((\parinst_i, \parinst_{i+1}))$, and 
4) $\xi_i.\lstate = \xi_{i+1}.\fstate$,
and 
5) for each $i\in \{0,\ldots, k\}$, $\xi_i.\dom \leq \timebound(\parinst_i).$
%
The set of {\em reachable states\/}  is  $\reachset_\ha := \{ \sigma.\lstate \ |\ \sigma$ is an execution$\}$.
The restriction of $\reachset_\ha$ to states with mode $\parinst \in \parset$ (i.e., agent following segment $\parinst$) is  denoted by $\reachset_\ha(\parinst)$.
Thus, the hybrid system verification problem requires us to check 
whether $\forall \parinst \in \parset$,  
$\reachset_\ha(\parinst) \cap \unsafeset(\parinst) = \emptyset$.




\section{Specifying Symmetry Maps in $\ourtool$}
\label{sec:symdef}

The hybrid automaton representing a scenario, as constructed by the ${\sf Hybrid}$ ${\sf constructor}$, is transformed into an abstract automaton. 
%
$\ourtool$ uses symmetry abstractions~\cite{sibai-tac-2020}. 
The abstraction is constructed by the $\abstractfunc$ function (line~\ref{ln:abstract} of Algorithm~\ref{code:bettersafetyVerifAlgo}) which uses a collection of pairs of maps  $\Phi = \{(\gamma_{\parinst}: \stateset \rightarrow \stateset,\rho_{\parinst}: \parset \rightarrow \parset)\}_{\parinst\in \parset}$ that is provided by the user.
We describe below how these maps are specified by the user in the ${\sf Dynamics}$ ${\sf file}$.
%
These maps should satisfy:
\begin{align}
\label{eq:detailed_transformation}
\forall\ t\geq 0, \stateinst_0 \in \stateset, \parinst \in \parset, \gamma_\parinst(\xi(\stateinst_0, \parinst,t)) = \xi(\gamma_\parinst(\stateinst_0), \rho_\parinst(\parinst), t).  
\end{align} 
where $\forall \parinst \in \parset$, the map $\gamma_\parinst$ is differentiable and invertible. 
Such maps are  called {\em symmetries} for the agent's dynamics. They transform the agent's trajectories to other symmetric ones of its trajectories starting from symmetric initial states and following symmetric modes (or segments in our scenario verification setting).
It is worth noting that~(\ref{eq:detailed_transformation}) does not depend on whether the trajectories $\xi$ are defined by ODEs or black-box simulators. 
Currently, condition~(\ref{eq:detailed_transformation}) is not checked by $\ourtool$ for the maps specified by the user. However, in the following discussion, we present some ways for the user to check~(\ref{eq:detailed_transformation}) on their own.
For ODE models, a sufficient condition for (\ref{eq:detailed_transformation}) to be satisfied is if:
$\forall\ \stateinst \in \stateset,\parinst \in \parset,\ \frac{\partial \gamma_\parinst}{\partial \stateinst} f(\stateinst, \parinst) = f(\gamma_\parinst(\stateinst), \rho_\parinst(\parinst))$, where $f$ is the right-hand-side of the ODE~\cite{russo2011symmetries}. For black-box models, $(\ref{eq:detailed_transformation})$ can be checked using sampling methods. In realistic settings, dynamics might not be exactly symmetric due to unmodeled uncertainties. In the future, we plan to account for such uncertainties as part of the reachability analysis.

In scenario verification, a given workspace would have a coordinate system according to which the plan (waypoints)  and the agent's state (position, velocity, heading angle, etc.) are represented. In a 2D workspace, for any segment $\parinst \in \parset$, an example symmetry $\rho_\parinst$ would transform the two waypoints of $s$ to a new coordinate system where the second waypoint is the origin and $\parinst$ is aligned with the negative side of the horizontal axis. 
%
The corresponding $\gamma_\parinst$ would transform the agent's state to this new coordinate system (e.g. by rotating its position and velocity vectors and shifting the heading angle). For such a pair $(\gamma_\parinst,\rho_\parinst)$ to satisfy (\ref{eq:detailed_transformation}), the agent's dynamics have to be invariant to such a coordinate transformation and (\ref{eq:detailed_transformation}) merely formalizes this requirement. 
Such an invariance property is expected from vehicles' dynamics--rotating or translating the lane should not change how an autonomous car behaves.


\begin{tcolorbox}
\paragraph{${\sf Dynamics}$ ${\sf file}$\/} is the second input provided by the user in addition to the ${\sf Scenario}$  ${\sf file}$  and it contains the following:


\begin{description}
    \item$\transpolyvirtual(\stateset',\parinst)$: returns $\gamma_\parinst(\stateset')$ for any polytope $\stateset' \subset \stateset$ and segment $\parinst \in \parset$.
    
    \item $\transmodevirtual(\parinst)$: returns $\rho_\parinst(\parinst)$ for any given segment $\parinst \in \parset$.
     
    \item $\transvirtualpoly(\stateset',\parinst)$:  returns $\gamma_\parinst^{-1}(\stateset')$, implementing the inverse of $\transpolyvirtual$. 
    \item 
    $\computereachset(\currinitset,\parinst,T)$: returns
    a list of hyperrectangles over-approximating the agent's reachset starting from $\currinitset$ following segment $\parinst$ for $T$ time units, for any set of states $\currinitset \subset \stateset$, segment $\parinst \in \parset$, and $T \geq 0$.
\end{description}
\end{tcolorbox}

\section{Symmetry Abstraction of the Scenario's Automaton}
\label{sec:symmetry_abstractions}
In this section, we describe how the  $\abstractfunc$ function in Algorithm~\ref{code:bettersafetyVerifAlgo} uses the functions in the ${\sf Dynamics}$ ${\sf file}$ to construct an abstraction of the scenario's hybrid automaton provided by the ${\sf Hybrid}$ ${\sf constructor}$.
Given the symmetry maps of $\Phi$, the symmetry abstraction of $\ha$ is another hybrid automaton $\ha_v$
that aggregates many symmetric modes (segments) of $\ha$ into a single mode of $\ha_v$.

\paragraph{Modes and Transitions}
Any segment $\parinst \in \parset$ of $\ha$ is mapped to the segment $\rho_\parinst(\parinst)$ in $\ha_v$ using $\transmodevirtual$. The set of modes $\parset_v$ of $\ha_v$ is the set of segments $\{\rho_\parinst(\parinst)\}_{\parinst \in \parset}$. For any $\parinst_v$,  $\timebound_v(\parinst_v) = \max_{\parinst \in \parset, \parinst_v = \rho_\parinst(\parinst)} \timebound(\parinst)$.
In the example of Section~\ref{sec:symdef}, the segments in $\ha_v$ are aligned with the horizontal axis and ending at the origin. The number of segments in $\ha_v$ would be the number of segments in $\plan$ with unique lengths. The agent would always be moving towards the origin of the workspace in the abstract scenario.
Any edge $\edgeinst = (\parinst, \parinst') \in \edgeset$ of $\ha$ is mapped to the edge $\edgeinst_v = (\rho_\parinst(\parinst), \rho_{\parinst'}(\parinst'))$ in $\ha_v$.
The $\guard(\edgeinst)$ is mapped to $\gamma_\parinst(\guard(\edgeinst))$  using $\transpolyvirtual$ which becomes part of $\guard_v(\edgeinst_v)$ in $\ha_v$. For any $\stateinst \in \stateset$, $\reset(\stateinst, \edgeinst)$, which is equal to $\stateinst$, is mapped to $\gamma_{\parinst'} (\gamma_\parinst^{-1}(\stateinst))$ and becomes part of $\reset_v(\stateinst, \edgeinst_v)$ in $\ha_v$. 
In our example in Section~\ref{sec:symdef}, 
the $\gamma_\parinst^{-1}(\stateinst)$  would represent $\stateinst$ in the absolute coordinate system assuming it was represented in the coordinate system defined by segment $\parinst$. The $\gamma_{\parinst'}(\gamma_\parinst^{-1}(\stateinst))$ would represent $\gamma_\parinst^{-1}(\stateinst)$ in the new coordinate system defined by segment $\parinst'$. 
The $\guard_v(\edgeinst_v)$ would be the union of rotated hyperrectangles centered at the origin that result from translating and rotating the guards of the edges represented by $\edgeinst_v$. 
%
 The initial set $\Theta$ of $\ha$ is mapped to $\Theta_v = \gamma_{\initpar}(\Theta)$, the initial set of $\ha_v$.
A formal definition of symmetry abstractions can be found in Appendix~\ref{sec:abstraction_definition} (or \cite{sibai-tac-2020}). 

The unsafe map  $\unsafeset$ is mapped to $\unsafeset_v$, where $\forall \parinst_v \in \parset_v, \unsafeset_v(\parinst_v) = \cup_{\parinst \in \parset, \rho_\parinst(\parinst) = \parinst_v} \gamma_\parinst(\unsafeset(\parinst))$. That means the obstacles near any segment $\parinst \in \parset$ in the environment will be mapped to be near its representative segment $\rho_\parinst(\parinst)$ in $\ha_v$.

A forward simulation relation between $\ha$ and $\ha_v$ can show that if $\ha_v$ is safe with respect to $\unsafeset_v$, then $\ha$ is safe with respect to $\unsafeset$. More formally, if $\forall \parinst_v \in \parset_v, \Reach_{\ha_v}(\parinst_v) \cap \unsafeset_v(\parinst_v) = \emptyset$, then $\forall \parinst \in \parset, \Reach_\ha(\parinst) \cap \unsafeset(\parinst) = \emptyset$ \cite{sibai-tac-2020}.

\section{$\ourtool$ Algorithm Overview}

A sketch of the core abstraction-refinement algorithm is shown in Algorithm~\ref{code:bettersafetyVerifAlgo}. 
It constructs a symmetry abstraction $\ha_v$ of the concrete automaton $\ha$ resulting from the ${\sf Hybrid}$ ${\sf constructor}$. 
$\ourtool$ attempts to verify the safety of $\ha_v$ using traditional reachability analysis. 
$\ourtool$ uses a $\cache$  to store
per-mode initial sets from which reachsets have been computed
and thus avoids repeating  computations.

\begin{algorithm}
	\caption{${\ourtool}(\Phi = \{(\gamma_\parinst, \rho_\parinst)\}_{\parinst \in \parset}, \ha, \unsafeset)$}
	\label{code:bettersafetyVerifAlgo}
	\begin{algorithmic}[1]
		\State $\ha_v, \unsafeset_v \gets \abstractfunc(\ha, \unsafeset, \Phi)$ \label{ln:abstract}
		\State $\forall \parinst \in \parset, \rv[\parinst] \gets \rho_\parinst(\parinst)$ \label{ln:rv_definition}
		\While{$\True$} \label{ln:whileloop}
		\State $\cache \gets \{\parinst_v \mapsto \emptyset\ |\ \parinst_v \in \parset_v  \}$ \label{ln:cache_init}
		\State $\result, \parinst_v^* \gets \verify(\rv[\initpar], \initset_v, \cache, \rv, \ha_v,\unsafeset_v)$ \label{ln:main_symar_call}
		\If{$\result = \safe$ or $\unknown$} {\bf return:} $\result$ \label{ln:return_result}
		\Else$\ \rv, \ha_v, \unsafeset_v \gets \splitmode(\parinst_v^*, \rv, \ha_v, \unsafeset_v, \ha, \unsafeset)$ \label{ln:refine}
		\EndIf
		\EndWhile
		\end{algorithmic}
\end{algorithm} 

The core algorithm $\verify$ (Algorithm~\ref{code:symar}) is called iteratively. If $\verify$ returns $(\safe, \bot)$ or $(\unknown,\bot)$, $\ourtool$ returns the same result. If $\verify$ instead results in $(\refine, \parinst_v^*)$, $\splitmode$ (check Appendix~\ref{sec:mode_split_algorithm} for the formal definition) is called to refine $\ha_v$ by splitting $\parinst_v^*$ into two modes $\parinst_v^1$ and $\parinst_v^2$.
Each of the two modes would represent  part of the set of the segments of $\parset$ that were originally mapped to $\parinst_v$ in $\rv$. Then the edges, guards, resets, and the unsafe sets related to $\parinst_v$  are split according to their definitions.
%


The function $\verify$ executes a {\em depth first search} (DFS) over the mode graph of $\ha_v$. For any mode $\parinst_v$ being visited,  $\computereachset$ computes  $R_v$, an over-approximation of the agent's reachset starting from $\currinitset$ following segment $\parinst_v$ for time $\timebound_v(\parinst_v)$. 
If $R_v \cap \unsafeset_v(\parinst_v) = \emptyset$, $\verify$ recursively calls $\parinst_v$'s children continuing the DFS in line~\ref{ln:call_symar_for_children}. Before calling each child, its initial set is computed and the part for which a reachset has already been  computed and stored in $\cache$ is subtracted. 
If all calls return $\safe$, then 
$\currinitset$
is added to the other initial sets in $\cache[\parinst_v]$ (line~\ref{ln:storereachset}) and $\verify$ returns $\safe$. 
Most importantly, if $\verify$ returns $(\refine,\parinst_v^*)$ for any of $\parinst_v$'s children, it directly returns $(\refine,\parinst_v^*)$ for $\parinst_v$ as well (line~\ref{ln:check_if_children_refine}). If any child returns $\unknown$ or $R_v$ intersects $\unsafeset_v(\parinst_v)$, $\verify$ will need to split $\parinst_v$. In that case, it checks 
if $\rv^{-1}[\parinst_v]$ is not a singleton set and thus amenable to splitting (line~\ref{ln:check_if_can_be_split}). If $\parinst_v$ can be split, $\verify$ returns $(\refine, \parinst_v)$.
Otherwise, $\verify$ returns $(\unknown, \bot)$ implicitly asking one of $\parinst_v$'s ancestors to be split instead.

\paragraph{Correctness} $\ourtool$ ensures that all the refined automata $\ha_v$'s are abstractions of the original hybrid automaton $\ha$
(a proof is given in Appendix~\ref{sec:mode_split_algorithm}). 
For any mode with a reachset intersecting the unsafe set, $\ourtool$ keeps refining that mode and its ancestors until safety can be proven or $\ha_v$ becomes $\ha$. 

\begin{theorem}[Soundness]
\label{thm:soundness}
If $\ourtool$ returns $\safe$, then $\ha$ is $\safe$.
\end{theorem}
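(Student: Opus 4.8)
The plan is to reduce the soundness of $\ourtool$ to two facts that are (essentially) already established in the excerpt: (i) the forward simulation argument of \cite{sibai-tac-2020} showing that if an abstract automaton $\ha_v$ (with abstract unsafe map $\unsafeset_v$) is safe then the concrete $\ha$ is safe with respect to $\unsafeset$; and (ii) the invariant that every automaton $\ha_v$ manipulated during the while-loop of Algorithm~\ref{code:bettersafetyVerifAlgo} is a genuine symmetry abstraction of $\ha$ (together with the correspondingly transformed $\unsafeset_v$), which is the property proved for $\splitmode$ in Appendix~\ref{sec:mode_split_algorithm}. Given these, it remains only to show that whenever $\verify$ returns $\safe$, the current $\ha_v$ actually is safe with respect to the current $\unsafeset_v$, i.e.\ $\forall \parinst_v \in \parset_v,\ \Reach_{\ha_v}(\parinst_v) \cap \unsafeset_v(\parinst_v) = \emptyset$.

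First I would set up a loop invariant for the outer \textbf{while} loop of Algorithm~\ref{code:bettersafetyVerifAlgo}: at the top of every iteration, $\ha_v$ is a symmetry abstraction of $\ha$ under (the current refinement of) $\Phi$ and $\unsafeset_v$ is the induced abstract unsafe map, so that $\ha_v \models \unsafeset_v \Rightarrow \ha \models \unsafeset$. The base case follows from the construction of $\abstractfunc$ in Section~\ref{sec:symmetry_abstractions} and the forward-simulation statement quoted there; the inductive step is exactly the correctness claim for $\splitmode$ cited in the paragraph ``Correctness'' (Appendix~\ref{sec:mode_split_algorithm}). Then $\ourtool$ returns $\safe$ only via line~\ref{ln:return_result} immediately after a call to $\verify$ that returned $(\safe,\bot)$, so it suffices to prove: if $\verify(\cdot,\cdot,\cache,\rv,\ha_v,\unsafeset_v)$ returns $\safe$, then $\ha_v$ is safe with respect to $\unsafeset_v$.

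The core step is therefore an induction over the recursion tree of $\verify$ (Algorithm~\ref{code:symar}), establishing the statement: whenever a call with arguments $(\parinst_v,\currinitset,\dots)$ returns $\safe$, every reachable execution of $\ha_v$ that is in mode $\parinst_v$ with current-segment-initial-state in $\currinitset$ avoids $\unsafeset_v$, and likewise for every mode reachable from $\parinst_v$ within that subcall. Here I would use the key soundness property of the reachability subroutine, namely that $\computereachset(\currinitset,\parinst_v,T)$ returns a list of hyperrectangles whose union over-approximates $\{\xi(\state,\parinst_v,t) : \state\in\currinitset,\ 0\le t\le T\}$ (stated in the ${\sf Dynamics}$ ${\sf file}$ description). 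Since $\verify$ returns $\safe$ only after checking $R_v \cap \unsafeset_v(\parinst_v) = \emptyset$ and recursively receiving $\safe$ from all children on the correctly-computed-and-cache-subtracted child initial sets, the over-approximation guarantee propagates; the only subtlety is the cache: one must argue that subtracting the portion of a child's initial set already stored in $\cache[\parinst_v']$ loses no reachable states, because those subtracted states were themselves covered by a previous $\safe$ subcall recorded in the same run (the cache is reset at line~\ref{ln:cache_init} on every outer iteration, so no stale entries survive a refinement). Combining the recursion-tree induction with the fact that $\verify$ is started from $(\rv[\initpar],\initset_v,\dots)$ and that every execution of $\ha_v$ begins in mode $\rv[\initpar]$ with first state in $\initset_v$ yields $\ha_v \models \unsafeset_v$, and then the outer loop invariant gives $\ha \models \unsafeset$.

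The main obstacle I anticipate is the cache-subtraction soundness argument: one has to be careful that the DFS plus caching genuinely covers every reachable continuous state in every reachable mode of $\ha_v$ — in particular that $\ha_v$ has no execution entering a mode $\parinst_v$ from an initial state outside the union of all $\currinitset$ values with which $\verify$ was (transitively) invoked on $\parinst_v$, and that each cache hit corresponds to states already certified $\safe$ in the current run. This requires a simultaneous induction tracking, for each mode, the union of initial sets considered; once that bookkeeping invariant is stated precisely, the rest is routine. Everything else — the reachability over-approximation, the guard/reset handling, and the abstraction-to-concrete transfer — is inherited from the cited results and the construction in Sections~\ref{sec:scene_to_automaton}--\ref{sec:symmetry_abstractions}.
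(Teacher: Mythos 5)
Your proposal is correct and follows essentially the same route as the paper: the paper's argument likewise reduces soundness to (i) the forward-simulation fact that safety of $\ha_v$ with respect to $\unsafeset_v$ implies safety of $\ha$ (cited from the symmetry-abstraction theory), (ii) the invariant that every refined $\ha_v$ produced by $\splitmode$ remains an abstraction of $\ha$ (Theorem~\ref{thm:abstraction_correctness} in Appendix~\ref{sec:mode_split_algorithm}), and (iii) soundness of the cached DFS reachability in $\verify$. The only difference is one of detail: the paper delegates step (iii) to a citation of standard fixed-point DFS verification, whereas you spell out the cache-subtraction and coverage bookkeeping explicitly, which is a useful (and correct) elaboration rather than a divergence.
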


\begin{algorithm}
	\caption{$\verify(\parinst_v, \currinitset, \cache, \rv, \ha_v, \unsafeset_v)$}
	\label{code:symar}
	\begin{algorithmic}[1]
		\State $R_v \gets \computereachset(\currinitset, \parinst_v)$ \label{ln:computereachset}
		\If {$R_v \cap \unsafeset_v(\parinst_v) = \emptyset$}
		\label{ln:checksafety}
		\For{$ \parinst_v' \in \nodechildren(\parinst_v)$} \label{ln:iterate_over_children}
		\State $\currinitset' \gets \reset_v(\guard_v((\parinst_v, \parinst_v')) \cap R_v) \textbackslash \cache[\parinst_v']$  
		\label{ln:initset}
		\If{$\currinitset' \neq \emptyset$} 
		\State $\result, \parinst_v^* \gets \verify(\parinst_v', \currinitset', \cache, \rv, \ha_v,\unsafeset_v)$ \label{ln:call_symar_for_children}
		\If{$\result = \refine$} {\bf return:} $\refine, \parinst_v^*$ \label{ln:check_if_children_refine}
		\ElsIf{$\result = \unknown$} {\bf break} \label{ln:check_if_children_unknown}
		\EndIf
		\EndIf \label{ln:reached_fixed_point}
		\EndFor
		\EndIf
		\If{$R_v \cap \unsafeset_v(\parinst_v) \neq \emptyset$ or $\result$ is $\unknown$} \label{ln:check_if_unsafe_or_children_unknown}
		\If{$|\rv^{-1}[\parinst_v]| > 1$} 
		\textbf{return:} $\refine, \parinst_v$ \label{ln:check_if_can_be_split}
		\Else \textbf{ return:} $\unknown, \bot$ \label{ln:else_let_parent_refine}
		\EndIf
		\EndIf
		\State $\cache[\parinst_v] \gets \cache[\parinst_v] \cup \currinitset$ 
		\label{ln:storereachset}
		\State \textbf{return:} $\safe, \bot$ 
		\label{ver_ln:return_safe}
	\end{algorithmic}
\end{algorithm} 

\noindent
If $\verify$ is provided with the concrete automaton $\ha$ and unsafe set $\unsafeset$, it will be the traditional safety verification algorithm having no over-approximation error due to abstraction. If such a call to $\verify$ returns $\safe$, then $\ourtool$ is guaranteed to return $\safe$. That means that the refinement ensures that the over-approximation error of the reachset caused by the abstraction is reduced to not alter the verification result. 

%
\paragraph{Counter-examples}
$\ourtool$ currently does not find counter-examples to show that the scenario is $\unsafe$. There are several sources of over-approximation errors, namely, $\computereachset$ and guard intersections. Even after all the over-approximation errors from symmetry abstractions are eliminated, as refinement does, it still cannot infer unsafe executions or counter-examples because of the other errors.
We plan to address this in the future by combining the current algorithm with systematic simulations.

\section{Experimental Evaluation}
\label{sec:ex} 

\paragraph{Agents and controllers}
In our experiments, we consider two types of nonlinear agent models:  
a standard 3-dimensional car (C) with bicycle dynamics and $2$ inputs, and a 6-dimensional quadrotor (Q) with $3$ inputs.
For each of these agents, we developed a 
 PD controller and  a NN controller for tracking segments. 
The NN controller for the quadrotor is from Verisig's paper~\cite{ivanov2019verisig} (Appendix~\ref{sec:quadrotor-case-study} for  more details) but modified  to be rotation symmetric (Appendix~\ref{sec:nonsymcontrol} for more details). 
Similarly, the NN controller for the car is also rotation symmetric. Both NN controllers are translation symmetric as they take as input the difference between the agent's state and the segment being followed.
The PD controllers are translation and rotation symmetric by design. 

\paragraph{Symmetries}
We experimented with two different collections of symmetry maps $\Phi$s: 1) translation symmetry (T), where for any  segment $\parinst$ in $\plan$, $\gamma_\parinst$ maps the states so that the coordinate system is translated by a vector that makes its origin at the end waypoint of $\parinst$, and 2) rotation and translation symmetry (TR), where instead of just translating the origin, $\Phi$ rotates the $xy$-plane so that $\parinst$ is aligned with the $x$-axis, which we described in Section~\ref{sec:symdef}.
For each agent and one of its controllers, we manually verified that condition (\ref{eq:detailed_transformation}) is satisfied for each of the two $\Phi$s using the sufficient condition for ODEs in Section~\ref{sec:symdef}.


\paragraph{Scenarios}
We created four scenarios with 2D workspaces  (S1-4)  and one scenario with a 3D workspace (S5) with corresponding plans. We generated the plans using an RRT planner~\cite{10.1007/978-3-030-53288-8_31} after specifying a number of goal sets that should be reached. 
We modified S4 to have more obstacles but still have the same plan and named the new version S4.b and the original one S4.a.
When the quadrotor was considered, the waypoints of the 2D scenarios (S1-4) were converted to 3D representation by setting the altitude for each waypoint to 0. Scenario S5 is the same as S2 but S5's waypoints have varying altitudes.
The scenarios have different complexities ranging from few segments and obstacles to hundreds of them. All scenarios are safe when traversed by any of the two agents.
%

We verify these scenarios using two instances of $\ourtool$, one with DryVR and the other with Flow*, implementing $\computereachset$.
$\ourtool$ is able to verify  all scenarios with PD controllers. 
The results are shown in Table~\ref{tab:other_tool}\footnote{\scriptsize Figures presenting the reachsets of the concrete and abstract automata for different scenarios can be found in Appendix~\ref{sec:reachset_figures}. The machine specifications can be found in Appendix~\ref{sec:machine_specification}.}. 
\paragraph{Observation 1: $\ourtool$ offers fast scenario verification and boosts existing reachability tools}
Looking at the two total time (Tt) columns for the two instances of $\ourtool$ with the corresponding columns for Flow* and DryVR, it becomes clear that symmetry abstractions can boost the verification performance of reachability engines. 
For example, in C-S4.a, $\ourtool$ with DryVR was around $20\times$ faster than DryVR. In C-S3, $\ourtool$ with Flow* was around $16\times$ faster than Flow*. In scenario Q-S5, $\ourtool$ timed out at least in part because a $\computereachset$ call to Flow* timed out.
Even when many refinements are required and thus causing several repetitions of the verification process in Algorithm~\ref{code:bettersafetyVerifAlgo}, $\ourtool$ is still faster than DryVR and Flow* (C-S4.b). All three tools resulted in $\safe$ for all scenarios when completed executions.

\begin{table}[!htp]\centering
\vspace{-0.3in}
\caption{\scriptsize Comparison between $\ourtool$, DryVR (DR), Flow* (F*), and $\ourtacastool$ (${\sf CacheR}$). Both $\ourtool$ and $\ourtacastool$ use reachability tools as subroutines. The subroutines used are specified after the '+' sign.
$\Phi$ is TR. 
The table shows the number of mode-splits performed (Nrefs), the total number of calls to $\computereachset$ (Rc), the total time spent in reachset computations (Rt), and the total computation time in minutes (Tt). In  scenarios where a tool ran over 120 minutes, we marked the Tt column as `Timed out'(TO) and when it errored, we marked it as `Not Available'(NA).  }\label{tab:other_tool}
\small
\begin{tabular}{lr|rrrr|rr|r|rrrr|rr|rr}\toprule
& &\multicolumn{4}{c|}{$\ourtool$+DR} & \multicolumn{2}{c|}{${\sf CacheR}$+DR}  &DR &\multicolumn{4}{c|}{$\ourtool$+F*} & \multicolumn{2}{c|}{${\sf CacheR}$+F*} &F*\\\cmidrule{3-16}
Sc. &$|\parset|$ &Nrefs &Rc &Rt &Tt &Rc &Tt&Tt &NRefs &Rc &Rt &Tt&Rc &Tt &Tt  \\\midrule
C-S1 &6 &1 &4 &0.14 &0.15 &46 &1.75&1.34 &1 &4 &0.51 &0.52 &52 &8.20&2.11 \\
C-S2 &140 &0 &1 &0.04 &0.66 &453.86 &37.42&11.25 &0 &1 &0.18 &0.79 &192 & 30.95&17.52 \\
C-S3 &458 &0 &1 &0.04 &4.26 &398.26 &33.32&75.35 &0 &1 &0.11 &4.34 &176 & 28.64&73.06 \\
C-S4.a &520 &2 &7 &0.26 &4.52 & 276.64 &23.23&95.02  &2 &7 &0.80 &4.96 &160 & 25.98 &61.53 \\
C-S4.b &520 &10 &39 &1.48 &8.90 &277.10 &23.17&95.05 &10 &39 &2.83 &31.73 &160 & 26.07&60.67 \\
Q-S1 &6 &1 &4 &0.05 &0.06 &NA&NA&0.25 &1 &4 &13.85 &14.13 &NA &TO &30.17 \\
Q-S2 &140 &0 &1 &0.04 &0.88 & NA& NA&4.93&0 &1 &3.38&12.62& NA&TO &TO   \\
Q-S3 &458 &0 &1 &0.06 &5.9 & NA& NA&45.03&0 &1 &4.98 &62.66& NA&TO&TO   \\
Q-S4.a &520 &0 &1 &0.06 &3.32 & NA& NA&55.99&0 &1 &4.8 &34.89  & NA&TO&TO \\
Q-S5 &280 &0 &36 &0.85 &3.06 & NA& NA&4.91&NA  &NA  &NA  & TO & NA&TO &TO \\
\bottomrule
\end{tabular}
\vspace{-0.3in}
\end{table}

\paragraph{Observation 2: $\ourtool$ is faster and more accurate than  $\ourtacastool$} Since $\ourtacastool$ only handles single-path plans, we only verify the longest path in the plans of the scenarios in its experiments.
%
%
$\ourtacastool$'s instance with Flow* resulted in unsafe reachsets in C-S1 and C-S4.b scenarios likely because of the caching over-approximation error. In all scenarios where $\ourtacastool$ completed verification besides C-S4.b, it has more Rc and longer Tt (more than $30\times$ in C-S2) while verifying simpler plans than $\ourtool$ using the same reachability subroutine.
In all Q scenarios, $\ourtacastool$'s instance with Flow* timed out, while its instance with DryVR terminated with an error.  

\paragraph{Observation 3: More symmetric dynamics result in faster verification time} $\ourtool$ usually runs slower in 3D scenarios compared to 2D ones (Q-S2 vs. Q-S5) in part because there is no rotational symmetry in the $z$-dimension to exploit. That leads to larger abstract automata. Therefore, many more calls to $\computereachset$ are required.

%

We only used $\ourtool$'s instance with DryVR for agents with NN-controllers\footnote{\scriptsize Check Appendix \ref{sec:other_tools} for a discussion about our attempts for using other verification tools for NN-controlled systems as reachability subroutines.}.
 We tried different $\Phi$s. The results are shown in Table~\ref{tab:refine}. When not using abstraction-refinement, $\ourtool$ took 
11, 132, and 73
minutes for the QNN-S2, QNN-S3, and QNN-S4 scenarios, while DryVR took 5, 46, and 55 minutes for the same scenarios, respectively.
Comparing these results with those in Table~\ref{tab:refine} shows that the speedup in verification time of $\ourtool$ is caused by the abstraction-refinement algorithm, achieving more than 13$\times$ in certain scenarios (QNN-S4 using $\Phi=$ T). $\ourtool$'s instance with DryVR was more than 10$\times$ faster than DryVR in the same scenario.
\begin{table}[!htp]\centering
\vspace{-0.3in}
\caption{
\scriptsize Comparison between $\Phi$s.
In addition to the statisitics of Table~\ref{tab:other_tool}, this table reports the number of modes and edges in the initial and final (after refinement) abstractions ($|\parset_v|^i$, $|\edgeset_v|^i$; $|\parset_v|^f$, and $|\edgeset_v|^f$, respectively)
}\label{tab:refine}
\small
\begin{tabular}{l|rrrrrrrrrrrr}\toprule
Sc. &NRef &$\Phi$ &$|\parset|$&$|\parset_v|^i$ &$|\edgeset_v|^i$ &$|\parset_v|^f$ &$|\edgeset_v|^f$  &Rc &Rt &Tt \\\midrule
CNN-S2 &7 &TR &140 &1 &1 &8 &20 &35 &2.83 &5.64 \\
CNN-S4 &10 &TR &520 &1 &1 &11 &32 &68 &5.57 &36.66 \\
QNN-S2 &3 &TR &140 &1 &1 &4 &9 &9 &0.61 &4.01 \\
QNN-S3 &7 &TR &458  &1 &1 &8 &23 &21 &2.11 &13.98 \\
QNN-S4 &6 &TR &520 &1 &1 &7 &20  &15 &1.51 &8.11 \\
QNN-S2 &0 &T &140 &7 &19 &7 &19  &9 &0.62 &1.85 \\
QNN-S3 &4 &T &458 &7 &30 &11 &58 &29 &2.85 &16.72 \\
QNN-S4 &0 &T &520 &7 &30 &7 &30 &13 &1.3 &5.32 \\
\bottomrule
\end{tabular}
\vspace{-0.3in}
\end{table}


\paragraph{Observation 4: Choice of $\Phi$ is a trade-off between over-approximation error and number of refinements} 
The choice of $\Phi$ affects the number of refinements performed and the total running times (e.g. QNN-S2, QNN-S3, and QNN-S4). Using TR leads to a more succinct $\ha_v$ but larger over-approximation error causing more mode splits. On the other hand, using T leads to a larger $\ha_v$ but less over-approximation error and thus fewer refinements. This trade-off can be seen in Table~\ref{tab:refine}. For example, QNN-S4 with $\Phi= $T resulted in zero mode splits leading to $|\parset_v|^i = |\parset_v|^f = 7$, while $\Phi=$ TR resulted in 6 mode splits, starting with $|\parset_v|^i = 1$ modes and ending with  $|\parset_v|^f = 7$, and longer verification time because of refinements. On the other hand, in QNN-S3, $\Phi =$ TR resulted in Nref$=7$, $|\parset_v|^f = 8$, and Tt$=13.98$ min while $\Phi=$T resulted in Nref$=4$, $|\parset_v|^f = 11$, and Tt$=16.72$ min.
\paragraph{Observation 5: Complicated dynamics require more verification time} Different vehicle dynamics affect the number of refinements performed and consequently the verification time (e.g. QNN-S2, QNN-S4, CNN-S2, and CNN-S4). The car appears to be less stable than the quadrotor leading to longer verification time for the same scenarios. This can also be seen by comparing the results of Tables~\ref{tab:other_tool} and \ref{tab:refine}. The PD controllers lead to more stable dynamics than the NN controllers requiring less total computation time for both agents. More stable dynamics lead to tighter reachsets and fewer refinements. 

\section{Limitations and Discussions}
$\ourtool$ allows the choice of modes to be changed from segments to waypoints or sequences of segments as well. The waypoint-defined modes eliminate the need for segments of $\plan$ to have few unique lengths, but only allow $\Phi=$T. 
$\ourtool$ splits only one mode per refinement and then repeats the computation from scratch. It has to refine many times in unsafe scenarios until reaching the result $\unknown$.  
We plan to investigate other strategies for eliminating spurious counter-examples and returning valid ones in unsafe cases.  
In the future, it will be important to address other sources of uncertainty in scene verification such as moving obstacles, interactive agents, and other types of symmetries such as permutation and time scaling. 
Finally, it will be useful to connect a translator to generate scene files from common road simulation frameworks such as CARLA~\cite{Dosovitskiy17}, commonroad~\cite{commonroad}, and Scenic~\cite{scenic}.


\bibliographystyle{splncs03_unsrt}
\bibliography{samplepaper}

\appendix
\newpage

\section{Tool architecture}
\label{sec:toolarch}

\begin{figure}
  \begin{center}
    \includegraphics[width=\textwidth]{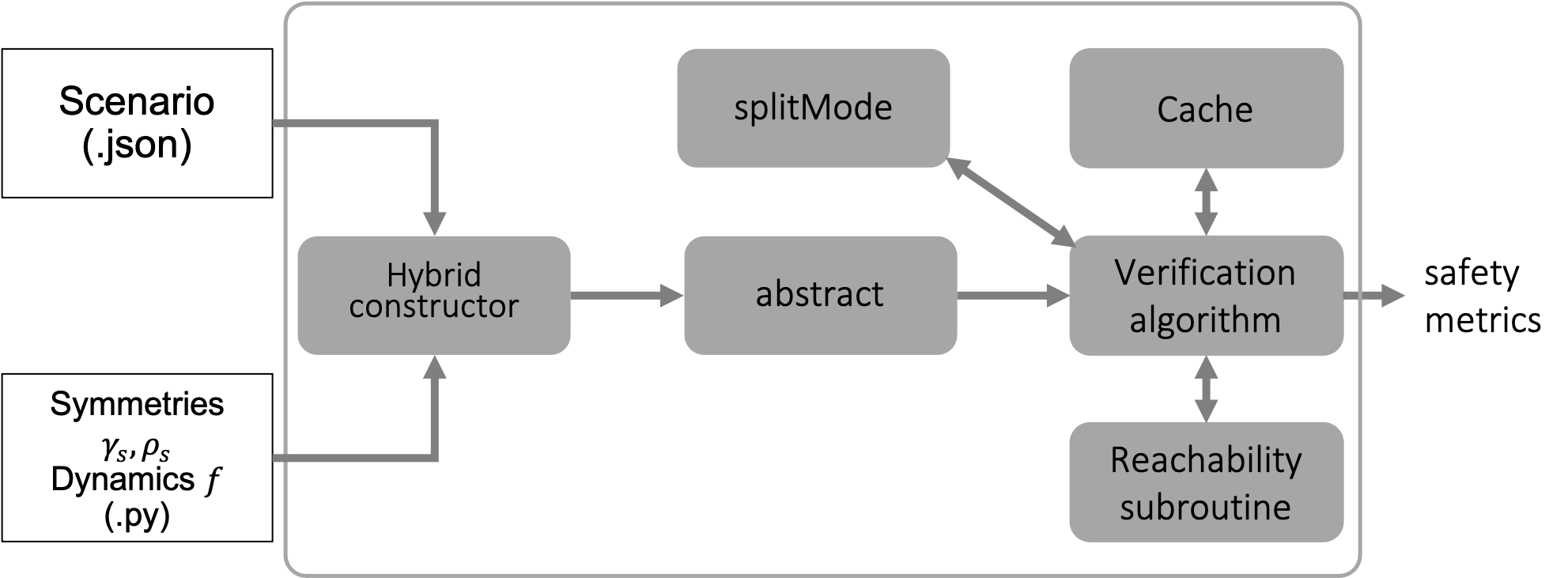}
  \end{center}
  \caption{$\ourtool$'s architecture. Its inputs are the ${\sf Scenario}$ ${\sf file}$ and the ${\sf Dynamics}$ ${\sf file}$. It outputs the safety results and other performance metrics. The ${\sf Hybrid}$ ${\sf constructor}$ component constructs a hybrid automaton from the given scenario. The $\abstractfunc$ component constructs the symmetry abstraction. The ${\sf Verification}$ ${\sf algorithm}$ component implements Algorithms~\ref{code:bettersafetyVerifAlgo} and \ref{code:symar}. The $\splitmode$ component refines the abstract automaton by splitting a given abstract mode. It implements Algorithm~\ref{code:splitmode} that is presented in Appendix~\ref{sec:mode_split_algorithm}. The ${\sf Cache}$ component stores the per-mode initial sets from which reachsets have been computed to avoid repeating computations and for fixed point checking. The ${\sf Reachability}$ ${\sf subroutine}$ component is the reachability tool being used for computing reachsets called by $\computereachset$ in the ${\sf Dynamics}$ ${\sf file}$.\label{fig:summary}} 
  \label{fig:arch}
\end{figure}

\section{Hybrid Automata definition}
\label{sec:hybridautomata}
In this section, we present a definition of hybrid automata~\cite{ACHH93,TIOAmon,Mitra07PhD} that $\ourtool$ constructs for the given scenario.
\begin{definition}
\label{def:hybrid-automaton}
A {\em hybrid automaton} is a tuple 
\begin{align}
\ha := \langle \stateset, \fparset, \initset, \finitpar, \edgeset, \guard, \reset, f\rangle, \text{  where}
\end{align}
\vspace{-0.15in}
\begin{enumerate}[label=(\alph*)]
\item $\stateset \subseteq \reals^n$ is the continuous state space, or simply the {\em state} space, and  $\fparset \subseteq \reals^d$ is the discrete state space, which we call the {\em parameter} or {\em mode} space (it is equal to $\parset$ in the scenario verification problem),
\item $\langle\initset, \finitpar\rangle \subseteq \stateset \times \fparset$ is a pair of a compact set of possible initial states and an initial mode ($\finitpar = \initmode$ in our setting),
\item $\edgeset \subseteq \fparset \times \fparset$ is a set of edges that specify possible transitions between modes, 
\item $\guard: \edgeset \rightarrow 2^\stateset$ defines the set of states at which a mode transition over an edge is possible,
\item $\reset: \stateset \times \edgeset \rightarrow 2^\stateset$ defines the possible updates of the state after a mode transition over an edge, and
\item $f: \stateset \times \fparset \rightarrow \stateset$ is the {\em dynamic} function that define the continuous evolution of the state in each mode. It is Lipschitz continuous in the first argument.
\end{enumerate}
\end{definition}

\section{Symmetries of dynamical systems}
\label{sec:parametrized_sys_symmetries}
In this section, we present formal definitions and sufficient conditions for symmetries of dynamical systems from the literature which are used in Section~\ref{sec:symdef}.

Let $\Gamma$ be a group of smooth maps acting on $\stateset$. 

\begin{definition}[Definition 2 in \cite{russo2011symmetries}]
	\label{def:symmetry}
 We say that $\gamma \in \Gamma$ is a symmetry of the ODE in Section~\ref{sec:problem} if its differentiable, invertible, and for any solution $\xi(\stateinst_0,\fparinst, \cdot)$, $\gamma (\xi(\stateinst_0,\fparinst, \cdot))$ is also a solution.
\end{definition} 




Coupled with the notion of symmetries, is the notion of equivarient dynamical systems. 
\vspace{-0.05in}
\begin{definition}[\cite{russo2011symmetries}]
	\label{def:equivariance_input}
	The dynamic function $f: \stateset \times \fparset \rightarrow \stateset$ is said to be $\Gamma$-equivariant if for any $\gamma \in \Gamma$, there exists $\rho: \fparset \rightarrow \fparset$ such that, 
	\begin{align}
	\label{eq:equivariance_condition}
	\forall\ \stateinst \in \stateset,\forall\ \fparinst \in \fparset,\ \frac{\partial \gamma}{\partial \stateinst} f(\stateinst, \fparinst) = f(\gamma(\stateinst), \rho(\fparinst)).
	\end{align}
\end{definition}

The following theorem draws the relation between symmetries and equivariance definitions. 
\begin{theorem}[Theorem 10 in \cite{russo2011symmetries}]
	\label{thm:sol_transform_parametrized_nonlinear}
	If $f$ is $\Gamma$-equivariant, then all maps in $\Gamma$ are symmetries of the ODE in Section~\ref{sec:problem}. Moreover, for any $\gamma \in \Gamma$, map $\rho: \fparset \rightarrow \fparset$ that satisfies equation (\ref{eq:equivariance_condition}), $x_0 \in \stateset$, and $\fparinst \in \fparset$, $\gamma(\xi(\stateinst_0, \fparinst,\cdot)) = \xi(\gamma(\stateinst_0), \rho(\fparinst), \cdot)$.  
\end{theorem}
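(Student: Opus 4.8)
The plan is to establish the solution-transformation identity first by a direct differentiation argument, and then obtain the symmetry claim as an immediate corollary. Fix $\gamma \in \Gamma$ together with a corresponding map $\rho$ satisfying the equivariance condition (\ref{eq:equivariance_condition}), fix $\stateinst_0 \in \stateset$ and $\fparinst \in \fparset$, and introduce the auxiliary curve $\eta(t) := \gamma(\xi(\stateinst_0, \fparinst, t))$. I would show that $\eta$ solves the \emph{same} ODE but in the transformed mode $\rho(\fparinst)$ and starting from the transformed initial state $\gamma(\stateinst_0)$, and then invoke uniqueness of solutions (guaranteed by the Lipschitz hypothesis on $f$ in its first argument, via Picard--Lindel\"of) to conclude that $\eta(t) = \xi(\gamma(\stateinst_0), \rho(\fparinst), t)$ throughout the common time domain.

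The core of the argument is a single chain-rule computation. Since $\gamma$ is differentiable (being a smooth element of $\Gamma$), I would differentiate $\eta$ and substitute the ODE satisfied by $\xi$:
\begin{align*}
\dot{\eta}(t) = \frac{\partial \gamma}{\partial \stateinst}\big(\xi(\stateinst_0,\fparinst,t)\big)\, \dot{\xi}(\stateinst_0,\fparinst,t) = \frac{\partial \gamma}{\partial \stateinst}\big(\xi(\stateinst_0,\fparinst,t)\big)\, f\big(\xi(\stateinst_0,\fparinst,t), \fparinst\big).
\end{align*}
Applying the equivariance identity (\ref{eq:equivariance_condition}) at the point $\stateinst = \xi(\stateinst_0,\fparinst,t)$ rewrites the right-hand side as $f\big(\gamma(\xi(\stateinst_0,\fparinst,t)), \rho(\fparinst)\big) = f(\eta(t), \rho(\fparinst))$. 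Combined with the initial condition $\eta(0) = \gamma(\xi(\stateinst_0,\fparinst,0)) = \gamma(\stateinst_0)$, this shows that $\eta$ is a solution of $\dot{z} = f(z, \rho(\fparinst))$ emanating from $\gamma(\stateinst_0)$. Uniqueness then yields the asserted identity $\gamma(\xi(\stateinst_0,\fparinst,\cdot)) = \xi(\gamma(\stateinst_0), \rho(\fparinst), \cdot)$.

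For the symmetry claim, I would observe that the identity just proved exhibits $\gamma(\xi(\stateinst_0,\fparinst,\cdot))$ as literally equal to $\xi(\gamma(\stateinst_0), \rho(\fparinst), \cdot)$, which is by construction a valid solution of the ODE (in mode $\rho(\fparinst)$). Hence $\gamma$ maps every solution to a solution. Since $\gamma \in \Gamma$ is smooth, hence differentiable, and invertible by the group structure of $\Gamma$, it meets all three requirements of Definition~\ref{def:symmetry}, so every $\gamma \in \Gamma$ is a symmetry of the ODE.

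The only point genuinely requiring care is the uniqueness step: it is what upgrades "$\eta$ is \emph{a} solution from $\gamma(\stateinst_0)$" to "$\eta$ \emph{equals} $\xi(\gamma(\stateinst_0),\rho(\fparinst),\cdot)$", and it rests entirely on the assumed Lipschitz continuity of $f$. A secondary subtlety is verifying that the chain-rule step is valid over the full time domain of $\xi$ and that $\eta$ stays in $\stateset$; the latter is automatic because $\gamma$ maps $\stateset$ into $\stateset$, and since $\gamma$ is a diffeomorphism it preserves maximal existence intervals, so the domains of the two trajectories coincide and the equality holds on all of the shared (time-bounded) domain.
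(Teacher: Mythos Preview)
Your proof is correct and follows essentially the same route as the paper's: define the transformed curve, differentiate it via the chain rule, apply the equivariance identity~(\ref{eq:equivariance_condition}) to recognize the result as $f(\eta(t),\rho(\fparinst))$, and conclude. The paper does not write out a proof of this particular theorem (it is cited from~\cite{russo2011symmetries}), but the proof it gives for the closed-loop analogue, Theorem~\ref{thm:sol_transform_input_nonlinear}, is the same chain-rule computation and is explicitly described there as ``similar to that of Theorem~10 in~\cite{russo2011symmetries}.'' If anything, you are slightly more careful than the paper in explicitly naming the Picard--Lindel\"of uniqueness step that upgrades ``$\eta$ is a solution'' to ``$\eta$ equals $\xi(\gamma(\stateinst_0),\rho(\fparinst),\cdot)$.''
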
 

That means that one can get the trajectory of the system starting from an initial state $\gamma(\stateinst_0)$ in mode $\rho(\fparinst)$ by transforming its trajectory starting from $\stateinst_0$ in mode $\fparinst$ using $\gamma$.

\section{Abstraction definition}
\label{sec:abstraction_definition}
In this section, we define formally symmetry abstractions \cite{sibai-tac-2020}. The abstraction requires a set of symmetry maps which they call the {\em virtual map}, defined as follows:
\begin{definition}[Definition 4 in \cite{sibai-tac-2020}]
	\label{def:virtual_map}
	Given a hybrid automaton $\ha$, a virtual map is a set  
	\begin{align}
	\Phi = \{(\gamma_p, \rho_p) \}_{\fparinst \in \fparset},
	\end{align}
	where for every $\fparinst \in \fparset$,
	$\gamma_{\fparinst}: \stateset \rightarrow \stateset$, 
	$\rho_\fparinst: \fparset \rightarrow \fparset$, and they satisfy equation~(\ref{eq:equivariance_condition}).
\end{definition}

Given a virtual map $\Phi$, they define a new map $\rv: \fparset \rightarrow \fparset$ as follows: $\rv(\fparinst) = \rho_\fparinst(\fparinst)$. This is equal to our initialization of the map $\rv$ in line~\ref{ln:rv_definition} of Algorithm~\ref{code:bettersafetyVerifAlgo}.
It maps modes of the original automaton $\ha$ to their corresponding ones in the abstract one $\ha_v$, defined next. 

The idea of the abstraction is to group modes of $\ha$ that share similar behavior in symmetry terms together in the same mode in the abstract automaton. The trajectories of such modes can be obtained by transforming the trajectories of the corresponding abstract mode using symmetry maps. 

\begin{definition}[Definition 5 in \cite{sibai-tac-2020}]
	\label{def:hybridautomata_virtual}
	Given a hybrid automaton $\ha$,   
	and a virtual map $\Phi$,
	the resulting {\em abstract (virtual) hybrid automaton} is: 
	$$\ha_v = \langle \stateset_v, \fparset_v, \initset_v, \finitparv,  
	\edgeset_v, \guard_v, \reset_v, f_v \rangle \text{, where } $$
	\begin{enumerate}[label=(\alph*)]
		\item $\stateset_v = \stateset$ and 
		$\fparset_v = \rv(\fparset)$
		\label{item:def_virtual_mode_set}
		\item $\initsetv = \gamma_{\finitmode}(\initset)$ and $\initmodev = \rv(\finitmode)$, \label{item:def_virtual_initial}
		\item $\edgeset_v = \rv(\edgeset) = \{ (\rv(\fparinstvf),\rv(\fparinstvs))\ |\ \edgeinst = (\fparinstvf,\fparinstvs) \in \edgeset \}$
		\label{item:def_virtual_edgeset}
		\item $\forall \edgeinst_v \in \edgeset_v, $
		\begin{align*}
		\guard_v(e_v) = \bigcup\limits_{\edgeinst \in \rv^{-1}(\edgeinst_v)}  \gamma_{\edgeinst.\src}\big(\guard(\edgeinst)\big),
		\end{align*}
		\label{item:def_virtual_guard}
		\item $\forall \stateinst_v \in \stateset_v, \edgeinst_v \in \edgeset_v, $ 
		\begin{align*}
		\reset_v(\stateinst_v, e_v) = \bigcup\limits_{\edgeinst \in \rv^{-1}(\edgeinst_v)}\ \gamma_{\edgeinst.\dest}\big(\reset\big(\gamma_{\edgeinst.\src}^{-1}(x_v), \edgeinst \big)\big), \text{ and}
		\end{align*}
		\label{item:def_virtual_reset}
		\item 
		$\forall \fparinst_v \in \fparset_v$, $\forall \stateinst \in \stateset, f_v(\stateinst, \fparinst_v) = f(\stateinst, \fparinst_v)$. \label{item:def_virtual_dynamics}
	\end{enumerate}
\end{definition} 

\section{Split modes algorithm}
\label{sec:mode_split_algorithm}

In this section, we describe $\splitmode$, the algorithm used to split an abstract mode $\parinst_v^*$ and update the abstract automaton $\ha_v$. 
Then, we prove that its result $\ha_v'$ is an abstraction of $\ha$ and that $\ha_v$ is an abstraction of $\ha_v'$, using two forward simulation relations.

\subsubsection{$\splitmode$ description}

The procedure $\splitmode$ takes as input the mode to be split $\parinst_v^*$, the map $\rv$ that maps original modes to abstract ones, the original and abstract automata $\ha$ and $\ha_v$, and the original and abstract unsafe maps $\unsafeset$ and $\unsafeset_v$. It outputs the 
the updated map $\rv$, updated abstract automaton $\ha_v$, and the updated abstract unsafe map $\unsafeset_v'$.

It starts by obtaining the set of original modes $\parset^*$ of $\ha$ represented by $\parinst_v^*$ in line~\ref{split_ln:orig_modes}. If the number of these modes is less than two, then the algorithm returns the given input 
to indicate failure to split in line~\ref{split_ln:cantrefine}. Otherwise, it initializes the output variables $\rv'$, $\ha_v'$, and $\unsafeset_v'$ by copying $\rv$, $\ha_v$,and $\unsafeset_v$. It then creates two new abstract modes $\parinstvf$ and $\parinstvs$ and adds them to $\parset_v'$ in line~\ref{split_ln:createtwomodes}. Then, it decomposes $\parset$ into two disjoint sets $\parset_1^*$ and $\parset_2^*$, in line~\ref{split_ln:splitorigmodes}. After that, it updates the map $\rv$ to map modes in $\parset_1^*$ to $\parinstvf$ and those of $\parset_2^*$ to $\parinstvs$ in line~\ref{split_ln:updaterv}. It then updates the initial mode of $\ha_v'$ in case the split mode was the root.

Now that it created the new modes, it proceeds into updating the edges of $\ha_v'$ and their $\guard_v$ and $\reset_v$ annotations. It iterates over the edges that connect $\parinst_v^*$ with its {\em parents}, which may include $\parinst_v^*$ itself, and create for each such edge, two edges connecting that parent with both $\parinstvf$ and $\parinstvs$ in line~\ref{split_ln:foraddingparentsedges}. It repeats the same process but for the edges that connect $\parinst_v^*$ with its {\em children} in line~\ref{split_ln:foraddingchildrenedges}. Finally, it checks if $\parinst_v^*$ had an edge connecting it to itself, and if that is the case, creates two edges connecting each of $\parinstvf$ and $\parinstvs$ to themselves. 

In line~\ref{split_ln:updatingguardsandresets}, it annotates the created edges with their guards and resets in the same way Definition~\ref{def:hybridautomata_virtual} defined them, but using the update map $\rv'$. In lines~\ref{split_ln:settingdynamics} and \ref{split_ln:settingdynamics2}, it sets the dynamic function of both modes to be the same as that of $\parinst_v^*$. Finally, it deletes $\parinst_v^*$ with all edges connected to it in line~\ref{split_ln:removingpv}. It then initializes the unsafe maps for the newly created modes by decomposing the unsafe set of $\parinst_v^*$ into those of the two modes. It returns in line~\ref{split_ln:returning} the new modes, new $\newrv$ and $\haref$, and the unsafe map $\unsafeset_v'$.

 \begin{algorithm}
	\small
	\caption{$\splitmode$($ \parinst_v^*, \rv, \ha_v, \unsafeset_v, \ha, \unsafeset$)}
	\label{code:splitmode}
	\begin{algorithmic}[1]
		\label{split_ln:input}
		\State $\parset^* \gets \rv^{-1}(\parinst_v^*)$.  \label{split_ln:orig_modes}
		\If{$|P| < 2$} 
		\textbf{ return: } $\rv, \ha_v, \unsafeset_v$.  \label{split_ln:cantrefine} 
		\EndIf
		\State Create copies of $\rv$, $\hav$, and $\unsafeset_v$, and name them  $\newrv$, $\haref$, and $\unsafeset_v'$. \label{split_ln:createcopies}
		\State Create two new virtual modes $\parinstvf$ and  $\parinstvs$ and add them to $\parset_v'$. \label{split_ln:createtwomodes}
		\State Split $\parset^*$ in half to two sets $\parset_1^*$ and $\parset_2^*$. \label{split_ln:splitorigmodes}
		\State $\newrv[\parset_1^*] \gets \parinstvf,\ \newrv[\parset_2^*] \gets\parinstvs$. \label{split_ln:updaterv}
		\State $\initmodev' \gets \rv'[\initmode]$
		\For{$\edgeinst_v \in \edgeset_v$ such that $\edgeinst_v.\dest = \parinst_v^*$}  \label{split_ln:foraddingparentsedges}
		\State Create two new edges $ (\edgeinst_v.\src,\parinstvf)$
		and $(\edgeinst_v.\src,\parinstvs)$.
		\EndFor
		\For{$\edgeinst_v \in \edgeset_v$ such that $\edgeinst_v.\src = \parinst_v^*$} \label{split_ln:foraddingchildrenedges}
		\State Create two new edges $ (\parinstvf, \edgeinst_v.\dest)$
		and $(\parinstvs, \edgeinst_v.\dest)$.
		\EndFor
		\If{$\exists\ \edgeinst_v \in \edgeset_v$ such that $\edgeinst_v.\src = \edgeinst_v.\dest = \parinst_v^*$} \label{split_ln:ifaddingselfedges}
		\State Create two new edges  $ (\parinstvf, \parinstvf)$ and $ (\parinstvs, \parinstvs)$.
		\EndIf
		\State Define the guards and resets of new edges using the virtual map $\newrv$. \label{split_ln:updatingguardsandresets}
		\State Remove added edges that have empty guards.
		\State Set $f_v'(\cdot, \parinstvf) = f_v(\cdot, \parinst_v^*)$. \label{split_ln:settingdynamics}
		\State Set $f_v'(\cdot, \parinstvs) = f_v(\cdot, \parinst_v^*)$. \label{split_ln:settingdynamics2}
		\State Remove $\parinst_v^*$ from  $\parset_v'$ and $\unsafeset_v'$, and remove all attached edges from $\edgeset_v'$. \label{split_ln:removingpv}
		\State $\unsafeset_v'(\parinstvf) \gets \cup_{\parinst \in \parset_1} \gamma_{\parinst}(\unsafeset_v(\parinst))$
		\State $\unsafeset_v'(\parinstvs)\gets \cup_{\parinst \in \parset_2} \gamma_{\parinst}(\unsafeset_v(\parinst))$
		\State \textbf{return: } $ \newrv, \haref, \unsafeset_v'$ \label{split_ln:returning}
	\end{algorithmic}
\end{algorithm}

\subsubsection{Correctness guarantees of $\splitmode$}
\label{sec:fsr_refined}
In this section, we show that the resulting automaton $\haref$ from $\splitmode$ is still a valid abstraction of $\ha$, but it is a tighter one than $\ha_v$ by showing that $\ha_v$ is an abstraction of $\haref$.

Consider $\fsrref$, the same relation as $\fsr$ defined in Theorem 3 in \cite{sibai-tac-2020}, but using $\newrv$ instead of $\rv$. Formally, 
 $\fsrref \subseteq (\stateset \times \parset)\times (\stateset_v' \times \parset_v')$ defined as $(\stateinst, \parinst) \fsrref (\stateinst_v',\parinst_v')$ if and only if: 
\begin{enumerate}[label=(\alph*)]
	\item $\stateinst_v' = \gamma_p(\stateinst)$, and
	\item $\parinst_v' = \newrv(\parinst)$.
\end{enumerate}
Let us refer to $\fsrref$ by $\fsrf$ and let $\fsrs \subseteq (\stateset_v' \times \parset_v') \times (\stateset_v \times \parset_v)$ be defined as:  
$(\stateinst_v', \parinst_v') \fsrs (\stateinst_v,\parinst_v)$ if and only if: 
\begin{enumerate}[label=(\alph*)]
	\item $\stateinst_v = \stateinst_v'$, and
	\item \begin{align}
	\parinst_v =
	\begin{cases}
	\parinst_v', \mbox{ if }\parinst_v' \notin \{\parinstvf, \parinstvs\}, \\
	\parinst_v^*,\mbox{ otherwise.} 
	\end{cases}
	\end{align}
\end{enumerate}
The following theorem shows that these two relations are forward simulation relations between $\ha$ and $\haref$ and $\haref$ and $\ha_v$, respectively. 
\vspace{-0.05in}
\begin{theorem}
\label{thm:abstraction_correctness}
Fix any abstract mode $\parinst_v^* \in \parset_v$ of $\ha_v$, let $$\newrv, \haref, \unsafeset_v' = \splitmode(\parinst_v^*, \rv,\ha_v,\unsafeset_v,\ha,\unsafeset).$$ Then, the resulting relations $\fsrf$ and $\fsrs$ are FSRs from $\ha$ to $\haref$ and $\haref$ to $\ha_v$, respectively, and $\ha \preceq_{\fsrf} \haref \preceq_{\fsrs} \ha_v$.
\end{theorem}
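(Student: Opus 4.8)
The plan is to verify, directly from the definition of a forward simulation relation (FSR) for hybrid automata, the three standard conditions --- agreement on start states, simulation of continuous trajectories, and simulation of discrete steps --- separately for $\fsrf$ (from $\ha$ to $\haref$) and for $\fsrs$ (from $\haref$ to $\ha_v$), and then to obtain the chain $\ha \preceq_{\fsrf} \haref \preceq_{\fsrs} \ha_v$ by composition. For the first relation I would reuse the argument behind Theorem~3 of~\cite{sibai-tac-2020} essentially verbatim, because the construction in $\splitmode$ makes $\haref$ coincide with the symmetry abstraction of $\ha$ in the sense of Definition~\ref{def:hybridautomata_virtual} but with $\rv$ replaced by the refined map $\newrv$; the proof of the second relation is new but short, since $\fsrs$ is the identity on continuous states and merely collapses the two fresh modes $\parinstvf,\parinstvs$ back into $\parinst_v^*$.

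For $\ha \preceq_{\fsrf} \haref$: (i) start states agree because $\splitmode$ sets $\initmodev' = \newrv(\initmode)$ and does not touch $\initsetv' = \gamma_{\finitmode}(\initset)$, matching the definition of $\fsrf$ at the initial mode; (ii) given a trajectory $\xi(\cdot,p,\cdot)$ of $\ha$ in a mode $p$, the curve $\gamma_p(\xi(\cdot,p,\cdot))$ is a trajectory of $\haref$ in mode $\newrv(p)$ by the equivariance condition~(\ref{eq:equivariance_condition}) together with the identity $f_v'(\cdot,\newrv(p)) = f(\cdot,\rho_p(p))$ --- this last point needs care since, unlike in~\cite{sibai-tac-2020}, we do \emph{not} have $\newrv(p) = \rho_p(p)$ for $p \in \parset^*$, but every such $p$ satisfies $\rho_p(p) = \parinst_v^*$ and $\splitmode$ explicitly clones the dynamics, $f_v'(\cdot,\parinstvf) = f_v'(\cdot,\parinstvs) = f_v(\cdot,\parinst_v^*) = f(\cdot,\parinst_v^*)$, so the two right-hand sides still coincide; (iii) for an edge $\edgeinst = (p,p') \in \edgeset$ and a state $x \in \guard(\edgeinst)$, the edge $(\newrv(p),\newrv(p'))$ is present in $\edgeset_v'$ --- this is exactly what the parent/child/self-loop enumerations of $\splitmode$ produce --- and its guard and reset are the Definition~\ref{def:hybridautomata_virtual} formulas instantiated at $\newrv$, so $\gamma_p(\guard(\edgeinst)) \subseteq \guard_v'((\newrv(p),\newrv(p')))$ and $\gamma_{p'}(\reset(x,\edgeinst)) \in \reset_v'(\gamma_p(x),(\newrv(p),\newrv(p')))$; the pruning of edges with empty guards is harmless because $\gamma_p$ is injective, so a nonempty $\guard(\edgeinst)$ never kills the target edge.

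For $\haref \preceq_{\fsrs} \ha_v$ I would check: (i) start states agree, since if $\initmode \in \parset^*$ then $\newrv(\initmode) \in \{\parinstvf,\parinstvs\}$, which $\fsrs$ maps to $\parinst_v^* = \rv(\initmode) = \initmodev$, and otherwise $\newrv(\initmode) = \rv(\initmode)$; (ii) any trajectory of $\haref$ in a mode $\parinst_v'$ is also a trajectory of $\ha_v$ in its $\fsrs$-image, because $f_v'(\cdot,\parinstvf) = f_v'(\cdot,\parinstvs) = f_v(\cdot,\parinst_v^*)$ and $f_v' = f_v$ on every other mode; (iii) every edge of $\haref$ incident to $\parinstvf$ or $\parinstvs$ was created from an edge of $\ha_v$ incident to $\parinst_v^*$, so its $\fsrs$-image is an edge of $\ha_v$, and since its guard (resp.\ reset) is a union over the smaller index set $\newrv^{-1}(\cdot) \subseteq \rv^{-1}(\cdot)$, it is contained in the guard (resp.\ reset) of the corresponding $\ha_v$ edge, which is what the FSR requires; edges not incident to the split mode are unchanged. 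Composing the two FSRs then yields $\ha \preceq_{\fsrf} \haref \preceq_{\fsrs} \ha_v$, and hence $\haref$ is a valid, and tighter, abstraction of $\ha$.

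I expect the main obstacle to be the edge bookkeeping in step (iii) of both parts: one must match each edge of the refined automaton to the correct edge of $\ha$ (resp.\ $\ha_v$), treat the parent-edge, child-edge, and self-loop cases around $\parinst_v^*$ separately (including the subcase where $\parinst_v^*$ is the initial mode), and confirm that the $\guard_v'$ and $\reset_v'$ that $\splitmode$ writes down literally agree with the Definition~\ref{def:hybridautomata_virtual} formulas taken at $\newrv$, so that the required guard/reset inclusions are exact rather than merely plausible. The continuous-dynamics subtlety --- that $\newrv(p) \neq \rho_p(p)$ on $\parset^*$ yet the dynamics still match because $\splitmode$ copies $\parinst_v^*$'s dynamics into $\parinstvf$ and $\parinstvs$ --- is the one genuinely new point relative to~\cite{sibai-tac-2020} and should be stated explicitly in the final proof.
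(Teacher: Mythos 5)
Your proposal is correct and follows essentially the same route as the paper: the paper likewise proves the first half by exhibiting a modified virtual map $\Phi'$ under which $\haref$ is exactly the Definition~\ref{def:hybridautomata_virtual} abstraction of $\ha$ (with the cloned dynamics $f_v'(\cdot,\parinstvf)=f_v'(\cdot,\parinstvs)=f_v(\cdot,\parinst_v^*)$ resolving the $\newrv(p)\neq\rho_p(p)$ issue you flag), and the second half by the identity-state, mode-collapsing map $\Phi_2$, invoking Theorem~3 of~\cite{sibai-tac-2020} in both cases. Your direct verification of the FSR conditions for $\fsrs$ is only a cosmetic variant of the paper's appeal to that theorem.
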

\begin{proof}
Let us prove the first half first: that $\fsrf$ is a FSR from $\ha$ to $\haref$. We do that by showing that $\ha_v'$ is the result of following Definition~\ref{def:hybridautomata_virtual} to create an abstraction of $\ha$ using a slightly modified version $\Phi'$ of the virtual map $\Phi$, where  $\Phi'$ itself is another virtual map for $\ha$.

By definition, $\forall \parinst \in \parset, \rv(\parinst) = \rho_\parinst(\parinst)$, where $(\gamma_\parinst, \rho_\parinst) \in \Phi$. Let $\Phi'$ be equal to $\Phi$ for all $\parinst \notin \parset$, where $\parset$ is as in line~\ref{split_ln:orig_modes} of $\splitmode$. For any $\parinst \in \parset$, let $\rho'_\parinst(\parinst) = \parinstvf$, if $\parinst \in \parset_1$, and $\rho'_\parinst(\parinst) = \parinstvs$, otherwise. Moreover, as in lines~\ref{split_ln:settingdynamics} and \ref{split_ln:settingdynamics2}, define the continuous dynamics $f'_v$ to be equal to $f_v$, for all $\parinst_v \in \parset_v \textbackslash \{\parinstvf, \parinstvs\}$, and to be equal to $f_v(\cdot, \parinst_v^*)$, otherwise. Then, $\Phi'$ is a virtual map of $f'_v$ since any $(\gamma_\parinst', \rho_\parinst') \in \Phi'$ satisfies equation~(\ref{eq:equivariance_condition}) for $f'_v$, because the corresponding $(\gamma_\parinst, \rho_\parinst) \in \Phi$ satisfies it for $f_v$. The map $\newrv$ is just the result of $\Phi'$ as $\rv$ is the result of $\Phi$. 

The edges created in $\splitmode$ for $\parinstvf$ and $\parinstvs$ in $\ha_v'$ are a decomposition of the edges connected to $\parinst_v^*$ in $\ha_v$, including self edges. Hence, the output of $\splitmode$ $\ha_v'$ is indeed the result of following Definition~\ref{def:hybridautomata_virtual} to construct an abstraction of $\ha$ using $\newrv$. It follows from Theorem 3 in \cite{sibai-tac-2020}, that $\haref$ is an abstraction of $\ha$ and $\fsrf = \fsrref$ is a corresponding FSR.

Now we prove the second half of the theorem: that $\fsrs$ is a FSR from $\ha_v'$ to $\ha_v$. We follow similar steps of the proof of the first half in defining a new map, which we name $\Phi_2$, and prove that it is a virtual map of $\ha_v'$. Let $\Phi_2 = \{(\gamma_{\parinst_v'}, \rho_{\parinst_v'})\}_{\parinst_v' \in \parset_v'}$, where $\gamma_{\parinst_v'}(\stateinst_v') = \stateinst_v'$ is the identity map and $\rho_{\parinst_v'}(\parinst_v') = \parinst_v'$, if $\parinst_v' \notin \{\parinstvf, \parinstvs\}$, and $\rho_{\parinst_v'}(\parinst_v') = \parinst_v^*$, otherwise. Because of lines~\ref{split_ln:settingdynamics} and \ref{split_ln:settingdynamics2}, $(\gamma_{\parinst_v'}, \rho_{\parinst_v'})$ satisfy equation~(\ref{eq:equivariance_condition}) with the RHS dynamic function being $f_v'$ of $f$. Finally, notice that $\ha_v$ can be retrieved from $\haref$ by following Definition~\ref{def:hybridautomata_virtual} using $\Phi_2$. It follows from Theorem 3 in \cite{sibai-tac-2020}, that $\fsrs$ is a FSR from $\ha_v'$ to $\ha_v$. Thus,  $\ha \preceq_{\fsrf} \haref \preceq_{\fsrs} \ha_v$. 
\end{proof}

\subsubsection{Usefulness of $\splitmode$ in safety verification}
\label{sec:splittingusefulness}
We discuss now the benefits of splitting a mode $\parinst_v^*$, where $\Reach_{\ha_{v}}(\parinst_v^*) \cap \unsafeset_v(\parinst_v^*)\neq \emptyset$. The non-empty intersection with the unsafe set can mean either that:
\begin{enumerate}
\item {\em genuine counterexample}: $\exists \parinst \in \parset$ of $\ha$ where $\Reach_{\ha}(\parinst) \cap \unsafeset(\parinst) \neq \emptyset$, and thus $\ha$ is unsafe, or
\item {\em spurious counterexample}: there exists an execution of $\ha_v$ that does not correspond to a one of $\ha$ that is intersecting $\unsafeset_v(\parinst_v^*)$, and thus the intersection is a result of the abstraction, and not a correct counterexample.
\end{enumerate}  
Spurious counter examples could happen because of the guards and resets of the edges incoming to $\parinst_v^*$ being too large that the initial set of states for that mode is being larger than it should. Remember from Definition~\ref{def:hybridautomata_virtual}, that the guard and reset of any of these edges $\edgeinst_v$ is the union of all the transformed guards and resets of the edges in $\edgeset$ that get mapped to $\edgeinst_v$. If too many of the original edges are mapped to $\edgeinst_v$, its guard and reset will get larger, causing more transitions and larger initial set of $\parinst_v^*$ in $\ha_v$. This might increase the possibility of spurious counter example. Moreover, by definition, $\unsafeset_v(\parinst_v)$ is the union of the unsafe sets of all the modes that are mapped to $\parinst_v^*$ under $\rv$. The more the original modes that get mapped to $\parinst_v^*$, the larger is the unsafe set of $\parinst_v^*$ and the higher is the chance of a spurious counter example.  

Upon splitting $\parinst_v^*$ into two abstract modes, the guards of the edges incoming to $\parinst_v^*$, i.e. where $\parinst_v^*$ is a destination, will have their guards divided between the edges to $\parinstvf$ and $\parinstvs$. Additionally, the unsafe sets of $\parinst_v^*$ will be divided between $\parinstvf$ and $\parinstvs$. This will make the over-approximation of the behaviors of $\ha$ by $\ha_v$ get tighter and safety checking less conservative.

\section{NN-controlled quadrotor case study}
\label{sec:quadrotor-case-study}
In this section, we will describe a case study of a scenario having a planner, NN controller, and a quadrotor and model it as a hybrid automaton. We use the quadrotor model that was presented in \cite{ivanov2019verisig} along, its trained NN controller (see Appendix~\ref{sec:nonsymcontrol} on how we modify it to be rotation symmetric), and a RRT planner to construct its reference trajectories, independent of its dynamics.  

The dynamics of the quadrotor are as follows:
\begin{align}
\label{eq:quadrotor_dynamics}
\dot{q} := 
\begin{bmatrix}
\dot{p}_x^q \\
\dot{p}_y^q \\
\dot{p}_z^q \\
\dot{v}_x^q \\
\dot{v}_y^q \\
\dot{v}_z^q
\end{bmatrix}
=
\begin{bmatrix}
\dot{v}_x^q \\
\dot{v}_y^q \\
\dot{v}_z^q \\
g \tan \theta \\
- g \tan \phi \\
\tau - g
\end{bmatrix},
\dot{w} := 
\begin{bmatrix}
\dot{p}_x^w \\
\dot{p}_y^w \\
\dot{p}_z^w \\
\dot{v}_x^w \\
\dot{v}_y^w \\
\dot{v}_z^w
\end{bmatrix} = 
\begin{bmatrix}
b_x \\
b_y \\
b_z \\
0 \\
0 \\
0
\end{bmatrix},
\end{align}
where $q$ and $w$ are the states of the quadrotor and the planner reference trajectory representing their position and velocity vectors in the 3D physical space, respectively. The variables $\theta$, $\phi$, and $\tau$ represent the control inputs pitch, roll, and thrust, respectively, provided by the NN controller. The input to the NN controller is the difference between the quadrotor state and the reference trajectory: $q - w$.  The $g = 9.81 m/s^2$ is gravity and $b_x,b_y,$ and $b_z$ are piece-wise constant resulting in a piece-wise linear planner trajectory. In our case, these would be determined by the RRT planner as we will explain next. 

The NN controller has two hidden layers with 20 neurons each with $\mathit{tanh}$ activation units and a linear output layer. It acts as a classifier to choose from a set $\inputset \subset [-0.1, 0.1] \times [-0.1, 0.1] \times [7.81, 11.81]$ of eight possible control inputs. 
It was trained to mimic a model predictive control (MPC) controller to drive the quadrotor to follow the planner trajectory. A NN is used for its faster runtime computation and reachability analysis and smaller memory requirements than traditional MPC controllers.

Given an initial set of positions $K \subset \reals^3$, a goal set of positions $\goalset = \cup_i  \goalset_i \subset \reals^3$, and a set of 3D obstacles, the planner would generate a directed graph over $\reals^3$ that connects $K$ to every $G_i$ with piece-wise linear paths. We denote the set of linear segments in the graph by $\roads:= \{\roadinst_i\}_i$. The planner ensures that the waypoints and segments do not intersect obstacles, but without regard of the quadrotor dynamics. 

The ${\sf Hybrid}$ ${\sf constructor}$ in $\ourtool$ models such a scenario as  a hybrid automaton: 
\begin{enumerate}[label=(\alph*)]
\item $\stateset = \reals^6$, the space in which the state of the quadrotor $q$ lives, and $\parset = \roads$, the space in which the graph segments live, where the first three and last three coordinates  determine the start and end points $\parinst.\src$  and $\parinst.\dest$ of the segment, respectively,
\item $\langle \Theta, \initpar \rangle := \langle [K,[-0.5, 0.5]^3], \initpar \rangle$, where $[-0.5, 0.5]^3$ are the range of initial velocities of the quadrotor and $\initpar$ is the initial segment going out of $K$,
\item $\edgeset := \{(\roadinst_i, \roadinst_{i+1})\ |\ \roadinst_i, \roadinst_{i+1} \in \roads,\  \roadinst.\dest = \roadinst_{i+1}.\src\}$, 
\item $\guard((\roadinst_i, \roadinst_{i+1}))$ is the 6D ball centered at $[\roadinst_i.\dest, 0, 0, 0]$ with radius $[1,1,1,\infty,\infty,\infty]$, meaning that the quadrotor should arrive within distance 1 unit of the destination waypoint of the first segment, which is equivalent to the source waypoint of the second segment, at any velocity, to be able to transition to the second segment/mode,
\item $\reset(q, (\roadinst_i, \roadinst_{i+1})) = q$, meaning that there is no change in the quadrotor state after it starts following a new segment, and
\item $f(q, \roadinst) = g(q, h(q, \roadinst_i))$, where $g: \stateset \times \inputset \rightarrow \stateset$ is the right hand side of the differential equation of $q$ in equation~(\ref{eq:quadrotor_dynamics}) and $h: \stateset \times \parset \rightarrow \inputset$ is the NN controller. Without loss of generalization, we assume that $[b_x,b_y,b_z] \in \{-0.125, 0.125\}^3$. $b_x$ is equal to $-0.125$ if $\roadinst.\src[0] > \roadinst.\dest[0]$ and $0.125$ otherwise. The same applies for $b_y$ and $b_z$. \label{eq:quadrotor_dynamic_function}
\end{enumerate}

\section{Symmetry with non-symmetric controllers}
\label{sec:nonsymcontrol}

\subsection{Symmetries of closed loop control systems}
In this section, 
we discuss the property that the controller should satisfy for a closed-loop control system to be symmetric.

Fix an input space $\inputset\subseteq \reals^m$ and consider a right hand side of the ODE in Section~\ref{sec:problem} of the form:
\begin{align}
\label{eq:control-system}
f_c(\stateinst, \parinst) := g(\stateinst, h(\stateinst, \parinst)), 
\end{align}
where $g: \stateset \times \inputset \rightarrow \stateset$ and $h: \stateset \times \parset \rightarrow \inputset$ are Lipschitz continuous functions with respect to both of their arguments.

In order to retain symmetry for such systems, we update the notion of equivariance of dynamic functions. But first, let us define symmetric controllers.
\begin{definition}
	\label{def:symmetric_controller}
Given three maps  $\beta: \inputset \rightarrow \inputset$, $\gamma: \stateset \rightarrow \stateset$, and $\rho: \parset \rightarrow \parset$. We call the control function $h$,  $(\beta,\gamma,\rho)$-symmetric, if for all $\stateinst \in \stateset$ and $\parinst \in \parset$, $\beta (h(\stateinst, \parinst)) = h(\gamma(\stateinst), \rho(\parinst))$.
\end{definition}
Definition~\ref{def:symmetric_controller} means that if we transform the input of the controller, the state and the mode, using the maps $\gamma$ and $\rho$, respectively, then its output gets transformed with the map $\beta$. Such a property formalizes intuitive assumptions about controllers in general. For example, translating the position of the quadrotor and the planned trajectory by the same vector should not change the controller output. The NN controller discussed in Appendix~\ref{sec:quadrotor-case-study} indeed satisfies this property since its input is the relative state $q - w$.
We update the notion of equivariance for closed-loop control systems to account for the controller in the following definition.
\begin{definition}
\label{def:control-systems-equivarince}
We call the control system dynamic function $f_c$ of equation~(\ref{eq:control-system}) $\Gamma$-equivariant if for any $\gamma \in \Gamma$, there exist $\rho: \parset \rightarrow \parset$ and $\beta: \inputset \rightarrow \inputset$ such that $h$ is $(\beta,\gamma,\rho)$-symmetric and 
\begin{align}
\label{eq:control_equivariance_condition}
\forall\ \stateinst \in \stateset,\forall\ \inputinst \in \inputset,\ \frac{\partial \gamma}{\partial \stateinst} g(\stateinst, \inputinst) = g(\gamma(\stateinst), \beta(\inputinst)).
\end{align} 
\end{definition}

The following theorem repeats the results of Theorem~\ref{thm:sol_transform_parametrized_nonlinear} for the closed loop control system.

\begin{theorem}
	\label{thm:sol_transform_input_nonlinear}
	If $f_c$ of equation~(\ref{eq:control-system}) is $\Gamma$-equivariant, then all maps in $\Gamma$ are symmetries. Moreover, for any $\gamma \in \Gamma$, maps $\rho: \parset \rightarrow \parset$ and $\beta: \inputset \rightarrow \inputset$ that satisfy equation (\ref{eq:control_equivariance_condition}), $x_0 \in \stateset$, and $\parinst \in \parset$, $\gamma(\xi_c(\stateinst_0, \parinst,\cdot)) = \xi_c(\gamma(\stateinst_0), \rho(\parinst), \cdot)$, where $\xi_c$ is the trajectory of the dynamical system with RHS equation~(\ref{eq:control-system}). 
\end{theorem}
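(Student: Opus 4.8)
The plan is to derive the result directly from \thmref{sol_transform_parametrized_nonlinear} (equivalently, Theorem~10 in \cite{russo2011symmetries}) by verifying that the closed-loop right-hand side $f_c$ of equation~(\ref{eq:control-system}) is $\Gamma$-equivariant in the sense of \defref{equivariance_input} --- that is, that the \emph{same} pair $(\gamma,\rho)$ witnessing controller symmetry also witnesses the ODE equivariance of $f_c$. Once that is established, both conclusions of the theorem follow by a single application of \thmref{sol_transform_parametrized_nonlinear} to $f_c$.

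First I would fix an arbitrary $\gamma \in \Gamma$ and take the maps $\rho: \parset \rightarrow \parset$ and $\beta: \inputset \rightarrow \inputset$ whose existence is guaranteed by the hypothesis that $f_c$ is $\Gamma$-equivariant (\defref{control-systems-equivarince}); by that definition $h$ is $(\beta,\gamma,\rho)$-symmetric and $g$ satisfies equation~(\ref{eq:control_equivariance_condition}) for this $\beta$. Then for any $\stateinst \in \stateset$ and $\parinst \in \parset$ I would compute, using the chain rule (valid since $\gamma$ is differentiable), then equation~(\ref{eq:control_equivariance_condition}) with $\inputinst := h(\stateinst,\parinst)$, and finally \defref{symmetric_controller}:
\begin{align*}
\frac{\partial \gamma}{\partial \stateinst} f_c(\stateinst,\parinst)
&= \frac{\partial \gamma}{\partial \stateinst} g\big(\stateinst, h(\stateinst,\parinst)\big) \\
&= g\big(\gamma(\stateinst), \beta(h(\stateinst,\parinst))\big) \\
&= g\big(\gamma(\stateinst), h(\gamma(\stateinst),\rho(\parinst))\big)
= f_c\big(\gamma(\stateinst),\rho(\parinst)\big).
\end{align*}
This is exactly equation~(\ref{eq:equivariance_condition}) for $f_c$ with the pair $(\gamma,\rho)$, so $f_c$ is $\Gamma$-equivariant as an ODE right-hand side.

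Before invoking \thmref{sol_transform_parametrized_nonlinear} I would check that $f_c$ satisfies its standing hypotheses: $f_c$ is Lipschitz continuous in its first argument because it is the composition of the Lipschitz maps $g$ and $h$ (each Lipschitz in both arguments), and $\Gamma$ is a group of smooth maps on $\stateset$ by assumption. Applying \thmref{sol_transform_parametrized_nonlinear} to $f_c$ then yields both claims at once: every $\gamma \in \Gamma$ is a symmetry of $\dot{\stateinst} = f_c(\stateinst,\parinst)$, and for the $\rho$ produced above (which satisfies~(\ref{eq:control_equivariance_condition}), hence~(\ref{eq:equivariance_condition}) for $f_c$), any $\stateinst_0 \in \stateset$, and $\parinst \in \parset$, one obtains $\gamma(\xi_c(\stateinst_0,\parinst,\cdot)) = \xi_c(\gamma(\stateinst_0),\rho(\parinst),\cdot)$.

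I do not expect a genuinely hard step; the work is essentially bookkeeping. The point requiring the most care is threading the three structure maps correctly --- the \emph{same} $\beta$ must appear both in the controller-symmetry identity for $h$ and in the equivariance identity for $g$, and the $\rho$ in the trajectory-transformation conclusion must be the one delivered by the controller symmetry rather than an independently chosen map --- together with confirming that the Lipschitz/regularity assumptions of \thmref{sol_transform_parametrized_nonlinear} transfer to the composed vector field $f_c$.
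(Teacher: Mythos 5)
Your proposal is correct and rests on exactly the same three-step identity chain as the paper's proof (chain rule, then equation~(\ref{eq:control_equivariance_condition}) for $g$ with $\inputinst = h(\stateinst,\parinst)$, then Definition~\ref{def:symmetric_controller} for $h$); the only difference is organizational, in that the paper carries out this computation along the trajectory $t \mapsto \gamma(\xi_c(\stateinst_0,\parinst,t))$ and concludes directly, whereas you establish the pointwise equivariance identity $\frac{\partial \gamma}{\partial \stateinst} f_c(\stateinst,\parinst) = f_c(\gamma(\stateinst),\rho(\parinst))$ and then invoke Theorem~\ref{thm:sol_transform_parametrized_nonlinear} as a black box. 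Both are valid, and the paper itself notes that its argument is just the proof of that theorem adapted to account for the controller $h$, so your reduction is a faithful modularization of the same idea.
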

\begin{proof}
Fix an initial state $\stateinst_0 \in \stateset$, a mode $\parinst \in \parset$, and $\gamma \in \Gamma$ with its corresponding maps $\rho$ and $\beta$ that satisfy Definition~\ref{def:equivariance_input} per the assumption of the theorem. For any $t \geq 0$, let $\stateinst = \xi_c(\stateinst_0,\parinst, t)$ and $y = \gamma(\stateinst)$. Then, 
\begin{align}
\frac{d y}{dt} &= \frac{\partial \gamma}{\partial x} \frac{d\state}{dt}, \mbox{ using the chain rule,} \nonumber\\
&= \frac{\partial \gamma}{\partial x} g(\state,h(\state,\parinst)), \mbox{ using equation~(\ref{eq:control-system}),} \nonumber\\
&= g(\gamma(\state), \beta(h(\state,\parinst))),  \mbox{ using equation~(\ref{eq:control_equivariance_condition}),} \nonumber\\
&= g(\gamma(\state), h(\gamma(\state), \rho(\parinst))),  \mbox{using Definition~\ref{def:symmetric_controller},}\nonumber \\
&= g(y, h(y, \rho(\parinst))), \mbox{ by substituting $\gamma(x)$ with $y$,} \nonumber \\
&= f_c(y, \rho(\parinst)).
\end{align}
Hence, $\gamma( \xi_c(\stateinst_0,\parinst, t))$ also satisfies equation~(\ref{eq:control-system}) and thus a valid solution of the system. Therefore, $\gamma$ is a symmetry per Definition~\ref{def:symmetry}. Moreover, $y$ is a solution starting from $\gamma(\stateinst_0)$ in mode $\rho(\parinst)$. 
This proof is similar to that of Theorem 10 in \cite{russo2011symmetries} with is the difference of having a controller $h$, which requires the additional assumption that $h$ is symmetric.
\end{proof}

In Appendix~\ref{sec:fromnontosym}, we discuss how to make non-symmetric controllers symmetric, and apply that to the NN-controller of the quadrotor to make it rotation symmetric.

\subsection{From non-symmetric controllers to symmetric ones}
\label{sec:fromnontosym}
In some cases, the controller $h$ is not symmetric. For example, the NN controller of the quadrotor in Section~\ref{sec:ex} and Appendix~\ref{sec:quadrotor-case-study} is not symmetric with respect to rotations in the $xy$-plane. We show a counter example in Figure~\ref{fig:non-symmetric-ctrl}. 

\setlength{\textfloatsep}{0pt}
\begin{figure}[!htbp]
\nocaption
	\begin{subfigure}[t]{0.49\textwidth}
		\centering
		\includegraphics[width=\textwidth]{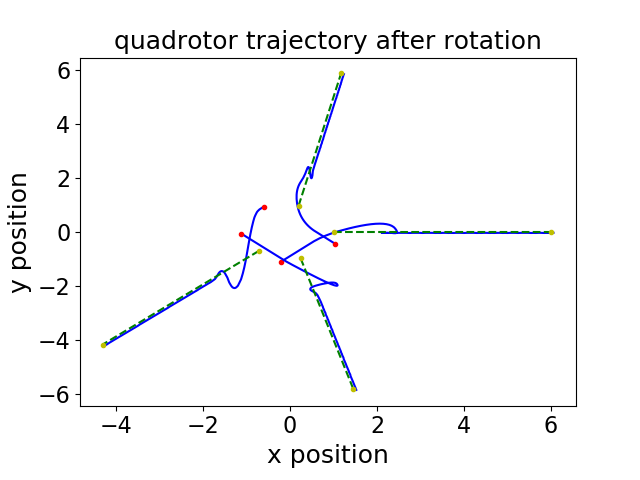}
		\vspace{-0.15in}
	    \caption{NN controller of the quadrotor is not rotation symmetric.\label{fig:non-symmetric-ctrl}
		}
		\vspace{\floatsep}
	\end{subfigure}
	\hspace{0.1in}
	\begin{subfigure}[t]{0.49\textwidth}
		\centering
		\includegraphics[width=\textwidth]{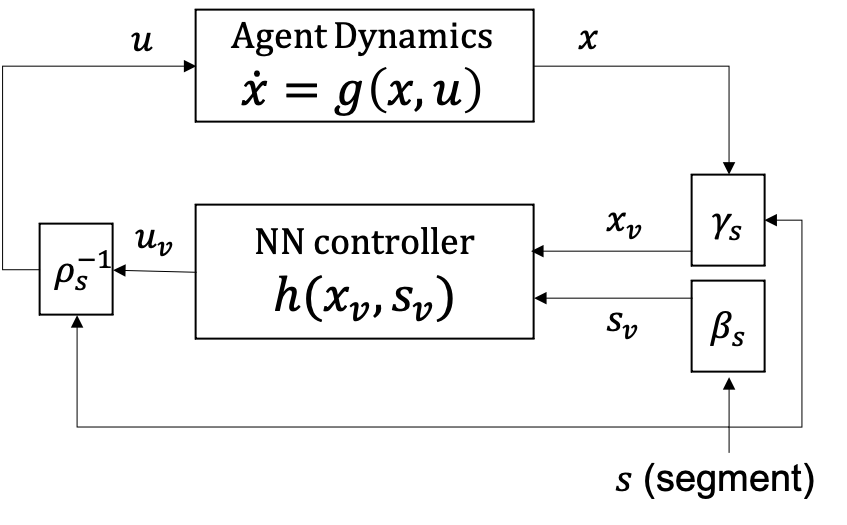}
		\vspace{-0.15in}
		\caption{Patching the controller to make it symmetric.\label{fig:patching_control}}
		\vspace{\floatsep}
	\end{subfigure}
\end{figure}

The NN input is the relative state $q - w$, which as we mentioned before, makes it symmetric to translations of the state and the reference trajectory. But, if we rotate the coordinate system of the physical $xy$-plane, 
there is no guarantee that the NN will change its outputs $\theta$ and $\phi$, such that the RHS of $\dot{v}_x^q$, and $\dot{v}_y^q$, $g \tan \theta$ and $-g \tan \phi$, are rotated accordingly.

Such non-symmetric controllers will prevent the dynamics from being equivariant. Equivariance is a desirable, and expected, property of certain dynamical systems. For example, vehicles dynamics are expected to be translation and rotation invariant in the $xy$-plane. Thus, non-symmetric controllers violate intuition about systems dynamics. Such controllers may not be feasible to abstract using Definition~\ref{def:hybridautomata_virtual}. 
Next, we will suggest a way to make any controller, including NN ones, such as that of the quadrotor, symmetric, leading to better controllers and retrieving the ability to construct abstractions. 

Consider again the closed loop control system of equation~(\ref{eq:control-system}). Let $\Phi = \{(\gamma_\parinst, \rho_\parinst)\}_{\parinst \in \parset}$ be a set of maps that we want it to be a virtual map of system~(\ref{eq:control-system}).  As before let us define $\rv: \parset \rightarrow \parset$ by $\rv(\parinst) = \rho_{\parinst}(\parinst)$, for all $\parinst \in \parset$. 
Assume that for every $\parinst \in \parset$, there exists $\beta_p: \inputset \rightarrow \inputset$, such that the open loop dynamic function $g$ is symmetric in the sense that it satisfies equation~(\ref{eq:control_equivariance_condition}). Moreover, assume that $\gamma_{\rv(\parinst)}, \beta_{\rv(\parinst)},$ and $\rho_{\rv(\parinst)}$ are identity maps for any $\parinst\in \parset$. This assumption means that applying the same symmetry transformation twice would not change the state nor the mode. 
Now, let us define a new controller $h': \stateset \times \parset \rightarrow \inputset$, that is shown in Figure~\ref{fig:patching_control}, as follows:
\begin{align}
\label{eq:symmetric_controller_definition}
	h'(\stateinst,\parinst) = 
	\beta_\parinst^{-1}(h(\gamma_\parinst(\stateinst),\rv(\parinst))).
\end{align}

\begin{theorem}
	\label{thm:making_controller_symmetric}
For any $\parinst \in \parset$, the controller $h'$ is $(\beta_\parinst, \gamma_\parinst, \rv)$-symmetric. 
\end{theorem}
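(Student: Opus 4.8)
The plan is to verify directly the defining equation of Definition~\ref{def:symmetric_controller} for the patched controller $h'$ given by \eqref{eq:symmetric_controller_definition}. Fix a mode $\parinst \in \parset$ and a state $\stateinst \in \stateset$; I must establish $\beta_\parinst(h'(\stateinst,\parinst)) = h'(\gamma_\parinst(\stateinst), \rv(\parinst))$. This is a purely algebraic identity among the doubly-indexed families of maps $\{\gamma_q\}$, $\{\beta_q\}$, $\{\rho_q\}$; there are no analytic estimates, and the entire content is careful bookkeeping of which index each map carries.

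First I would expand the left-hand side. By \eqref{eq:symmetric_controller_definition}, $h'(\stateinst,\parinst) = \beta_\parinst^{-1}\big(h(\gamma_\parinst(\stateinst), \rv(\parinst))\big)$, and since $\beta_\parinst$ is invertible, applying $\beta_\parinst$ yields simply $h(\gamma_\parinst(\stateinst), \rv(\parinst))$.

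Next I would expand the right-hand side. Setting $y := \gamma_\parinst(\stateinst)$ and $m := \rv(\parinst)$ and applying \eqref{eq:symmetric_controller_definition} at the pair $(y,m)$ gives $h'(y,m) = \beta_m^{-1}\big(h(\gamma_m(y), \rv(m))\big) = \beta_{\rv(\parinst)}^{-1}\big(h(\gamma_{\rv(\parinst)}(\gamma_\parinst(\stateinst)), \rv(\rv(\parinst)))\big)$. Here I invoke the standing hypothesis of the subsection: $\gamma_{\rv(\parinst)}$, $\beta_{\rv(\parinst)}$, and $\rho_{\rv(\parinst)}$ are all identity maps (applying a symmetry of $\Phi$ twice has no effect). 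The identities $\gamma_{\rv(\parinst)} = \mathrm{id}$ and $\beta_{\rv(\parinst)} = \mathrm{id}$ collapse the expression to $h(\gamma_\parinst(\stateinst), \rv(\rv(\parinst)))$, and since $\rv(\rv(\parinst)) = \rho_{\rv(\parinst)}(\rv(\parinst)) = \rv(\parinst)$ by $\rho_{\rv(\parinst)} = \mathrm{id}$, this equals $h(\gamma_\parinst(\stateinst), \rv(\parinst))$ — exactly the left-hand side. Hence $h'$ is $(\beta_\parinst, \gamma_\parinst, \rv)$-symmetric.

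The step most prone to slips is this last reduction: one must keep $\beta_\parinst$ (from the outer application) distinct from $\beta_{\rv(\parinst)}$ (produced by unfolding $h'$ at the transformed mode), and must recall that $\rv$ is the diagonal map $q \mapsto \rho_q(q)$, so that $\rv(\rv(\parinst))$ is $\rho_{\rv(\parinst)}(\rv(\parinst))$ and not $\rho_\parinst(\rv(\parinst))$. I would also point out explicitly that the equivariance hypothesis on the open-loop dynamics $g$ (equation~\eqref{eq:control_equivariance_condition}) is not used in this proof; it enters only afterwards, together with this theorem and Theorem~\ref{thm:sol_transform_input_nonlinear}, to conclude that the closed-loop system obtained by substituting $h'$ for $h$ is $\Gamma$-equivariant and hence amenable to the abstraction of Definition~\ref{def:hybridautomata_virtual}.
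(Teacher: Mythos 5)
Your proof is correct and follows essentially the same route as the paper's: both expand $h'$ via equation~(\ref{eq:symmetric_controller_definition}), cancel $\beta_\parinst\beta_\parinst^{-1}$, and invoke the assumption that $\gamma_{\rv(\parinst)}$, $\beta_{\rv(\parinst)}$, and $\rho_{\rv(\parinst)}$ are identities; the only difference is that you expand both sides and meet in the middle, whereas the paper writes one chain of equalities from $\beta_\parinst(h'(\stateinst,\parinst))$ to $h'(\gamma_\parinst(\stateinst),\rv(\parinst))$. Your closing remarks on the index bookkeeping for $\rv(\rv(\parinst))$ and on the non-use of the equivariance hypothesis on $g$ are accurate.
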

\begin{proof}
Fix $\stateinst \in \stateset$ and $\parinst \in \parset$. Then, $\beta_\parinst (h'(\stateinst, \parinst)) $
\begin{align}
&= \beta_\parinst (\beta_\parinst^{-1} (h(\gamma_\parinst(\stateinst), \rv(\parinst))), \mbox{ using equation~(\ref{eq:symmetric_controller_definition}), } \nonumber \\
&= h(\gamma_\parinst(\stateinst), \rv(\parinst)), \mbox{ since $\beta_\parinst\beta_\parinst^{-1}$ is the identity map,} \nonumber \\
&= \beta_{\rv(\parinst)}^{-1}h(\gamma_{\rv(\parinst)}(\gamma_\parinst(\stateinst)), \rho_{\rv(\parinst)}(\rv(\parinst))), \nonumber \\
&\hspace{0.1in} \mbox{[since $\gamma_{\rv(\parinst)}$ and $\beta_{\rv(\parinst)}$ are identity maps, $\rho_{\rv(\parinst)}(\rv(\parinst)) = \rv(\parinst)$], } \nonumber \\
&= h'(\gamma_{\parinst}(\stateinst), \rv(\parinst)),
\end{align}
where the last equality follows from using equation~(\ref{eq:symmetric_controller_definition}) again.
\end{proof}
\vspace{-0.05in}
The controller $h'$ ensures that all modes $\parinst \in \parset$ that get mapped to the same mode $\parinst_v$ by $\rv$ have a transformed version, using $\beta_\parinst$, of a unique control for the same transformed state $\gamma_\parinst(\stateinst)$. That unique control is equal to $h$ with mode $\parinst_v$. This ensures that all modes that are equivalent under $\rv$ have symmetric behavior when the open loop dynamic function $g$ is symmetric as well.  

\section{Machine Specification}
\label{sec:machine_specification}
\begin{itemize}
    \item Processor:  AMD Ryzen 7 5800X CPU @ 3.8GHz x 8
    \item Memory: 32GB
\end{itemize}

\section{Trying other NN-controlled systems' verification tools as reachability subroutines}
\label{sec:other_tools}
\begin{enumerate}
\item The state-of-the-art verification tool for NN-controlled systems Verisig needs up to 30 minutes to compute the reachsets for 4 segments in a quadrotor scenario~\cite{ivanov2019verisig}. We tried Verisig and it took similar or longer amount of time for scenarios with less than five segments. We decided to use DryVR for faster evaluation of $\ourtool$ in this paper.
\item We tried NNV~\cite{NNV}, however the resulting reachsets had large over-approximation errors in our scenarios to the point of being not useful.  We contacted its developers and they are working on the conservativeness problem.
\item We considered using the tool of “Reachability analysis for neural feedback systems using regressive polynomial rule inference”, by Dutta et. al.~\cite{Sherlock-poly-2019}. We were unable to implement our scenarios in manageable time given that there is no manual to use the tool. 
\end{enumerate}

\newpage
\section{Reachset and Scenarios Figures}
\label{sec:reachset_figures}
\begin{figure}[H]
    \includegraphics[width=0.5\textwidth]{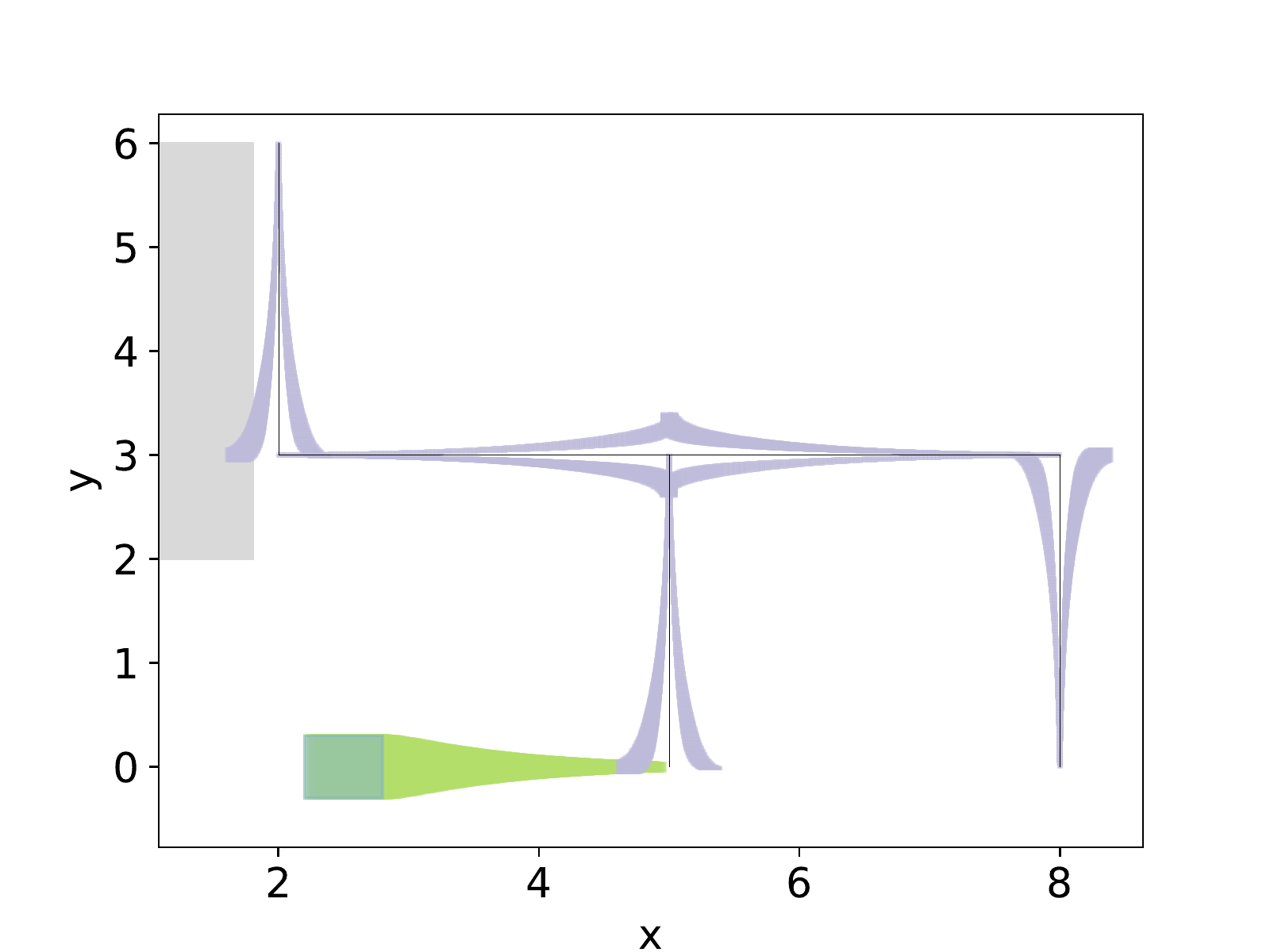}
    \includegraphics[width=0.5\textwidth]{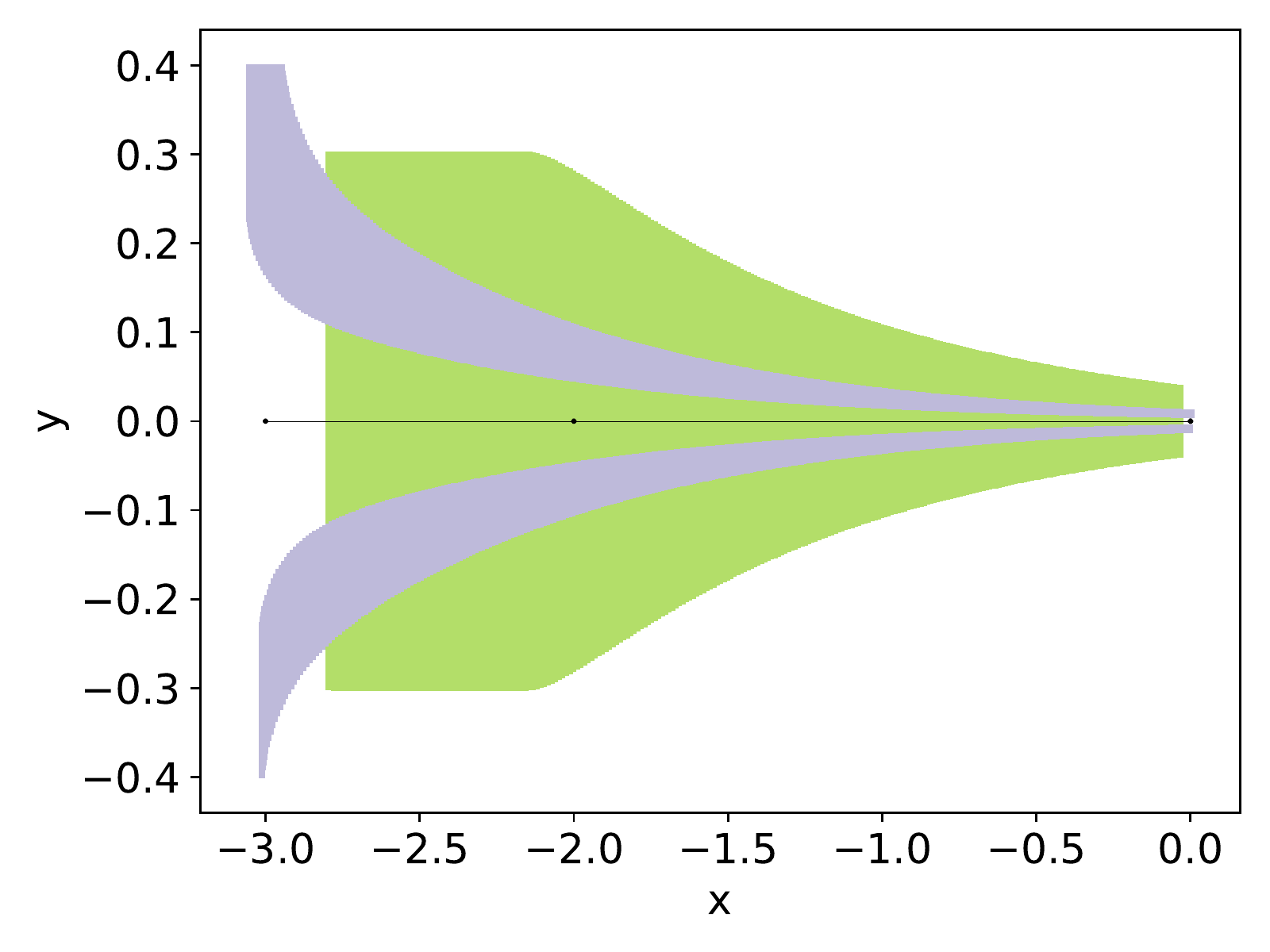}
    \includegraphics[width=0.5\textwidth]{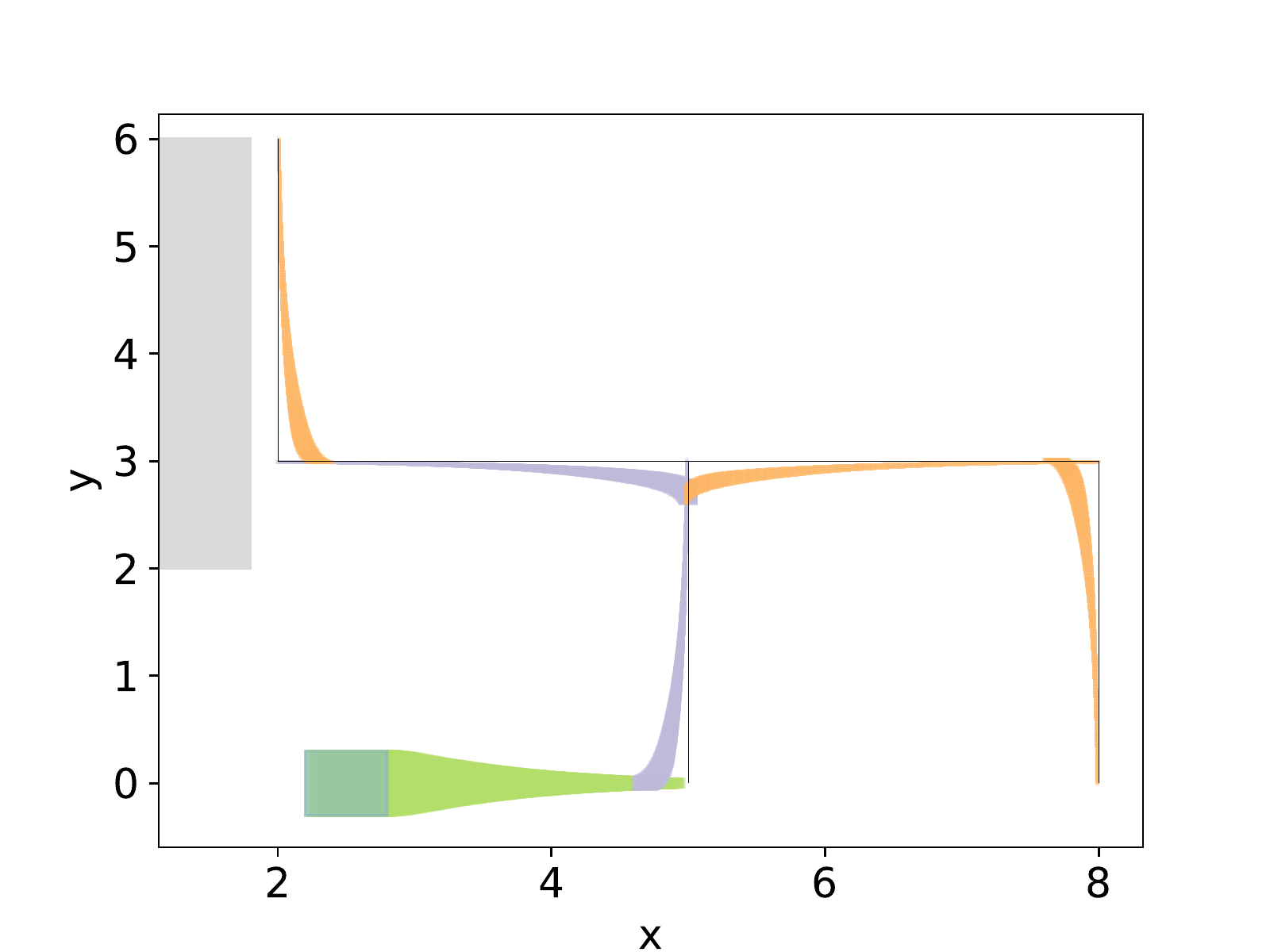}
    \includegraphics[width=0.5\textwidth]{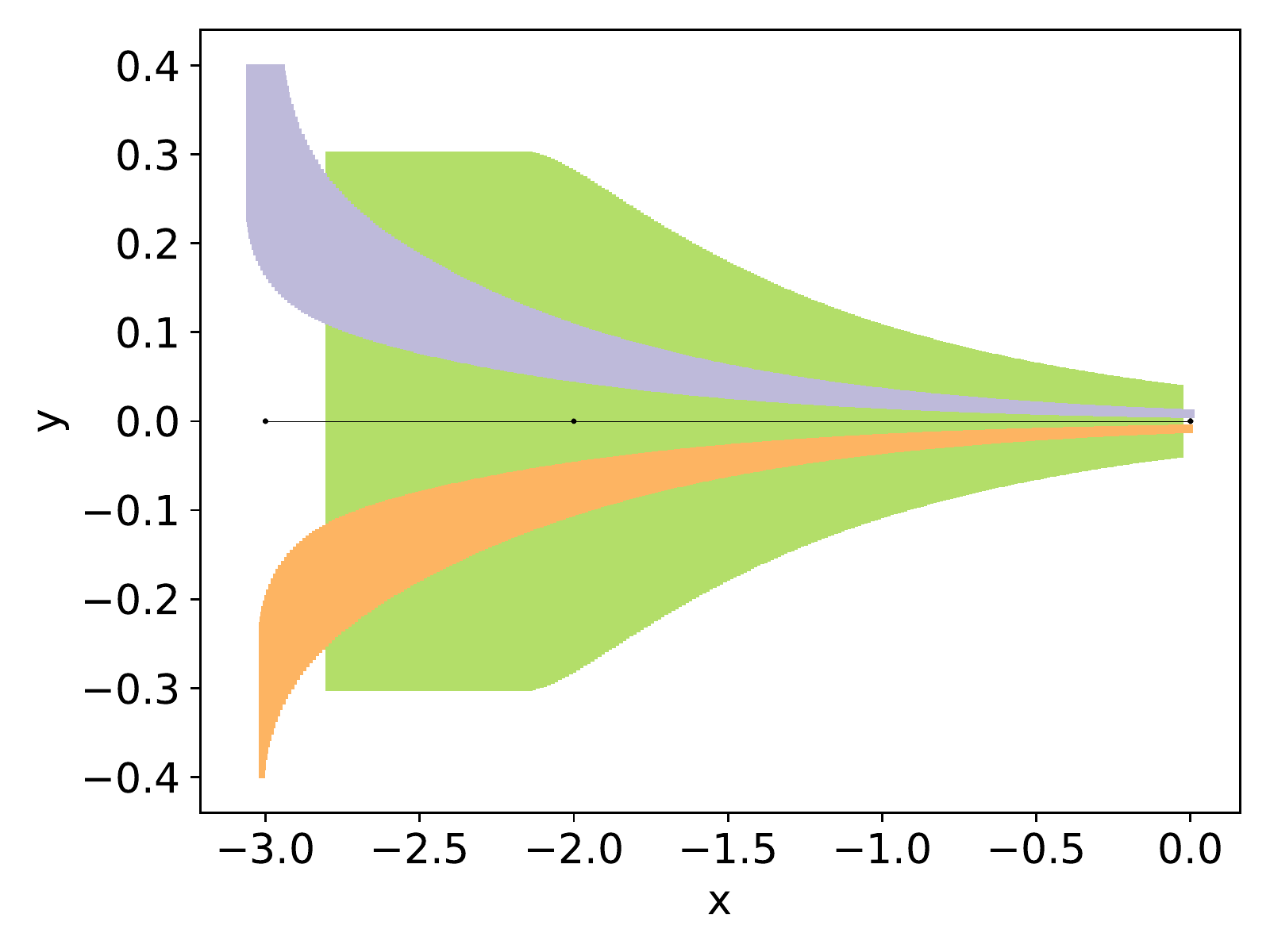}
    \includegraphics[width=0.5\textwidth]{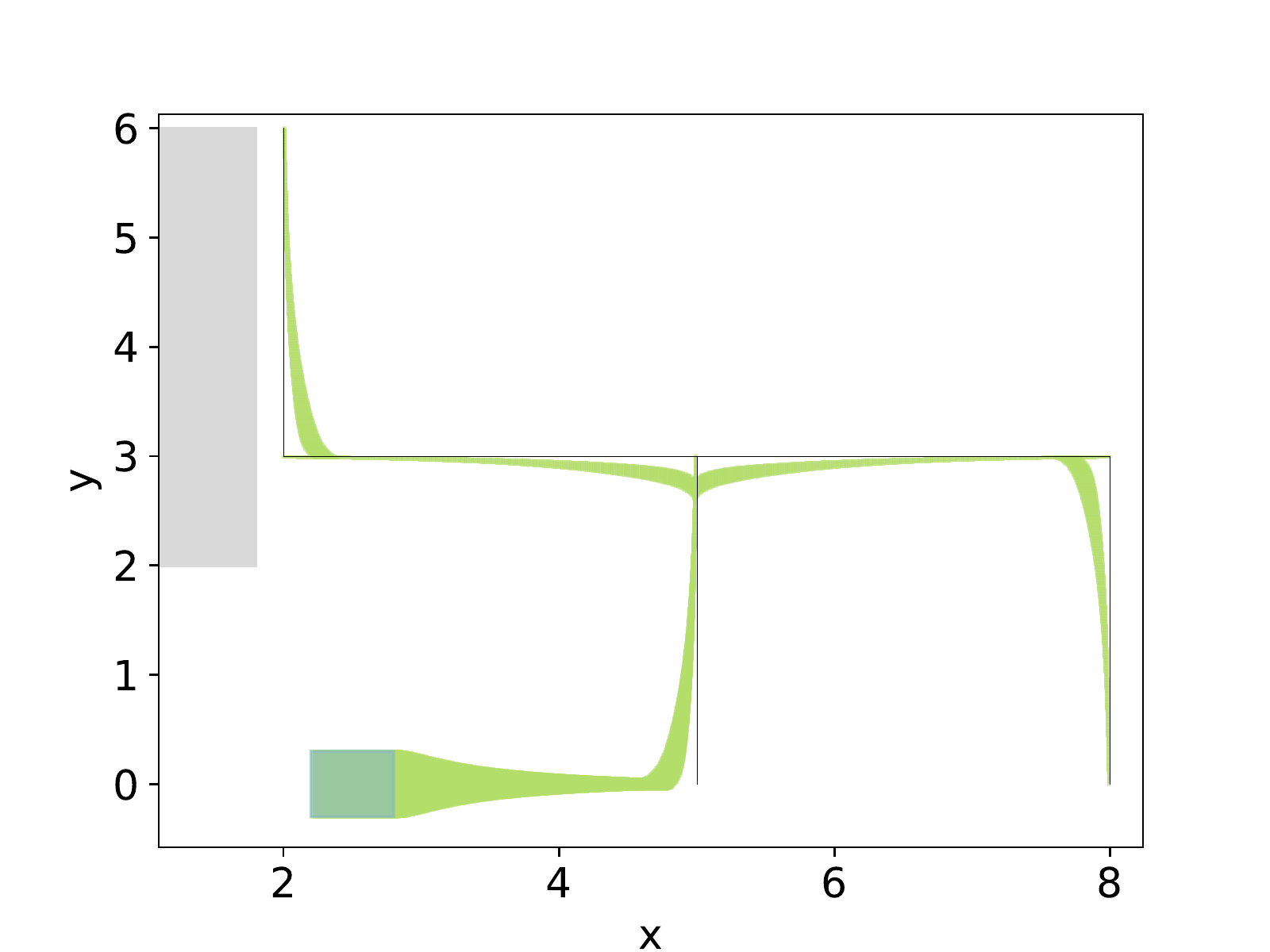}
    \caption{Scenario verification example C-S1. Uncertain initial positions (blue square). Plan defined by the black segments. Obstacle (grey rectangle). The possible positions the car might pass by (i.e. reachset) while following the plan (green, orange, and violet). The different colors correspond to different abstract segments (defined in Section~\ref{sec:symmetry_abstractions}). All figures are generated by $\ourtool$'s instance with DryVR subroutine using $\Phi =$ TR. The first row is when using abstraction but no refinements are allowed. The second row is where using the abstraction-refinement algorithm (only one refinement needed). The third row is generated without using the abstraction refinement algorithm. The left column represents the concrete scenario with the computed reachsets. This reachset is just used for visualization purposes but not used in the verification process. Only the abstract reachset is used for verification. The right column represents the reachsets of the abstract automaton.}
\label{fig:sceneverif}
\end{figure}

\begin{figure}[H]
    \includegraphics[width=0.49\textwidth]{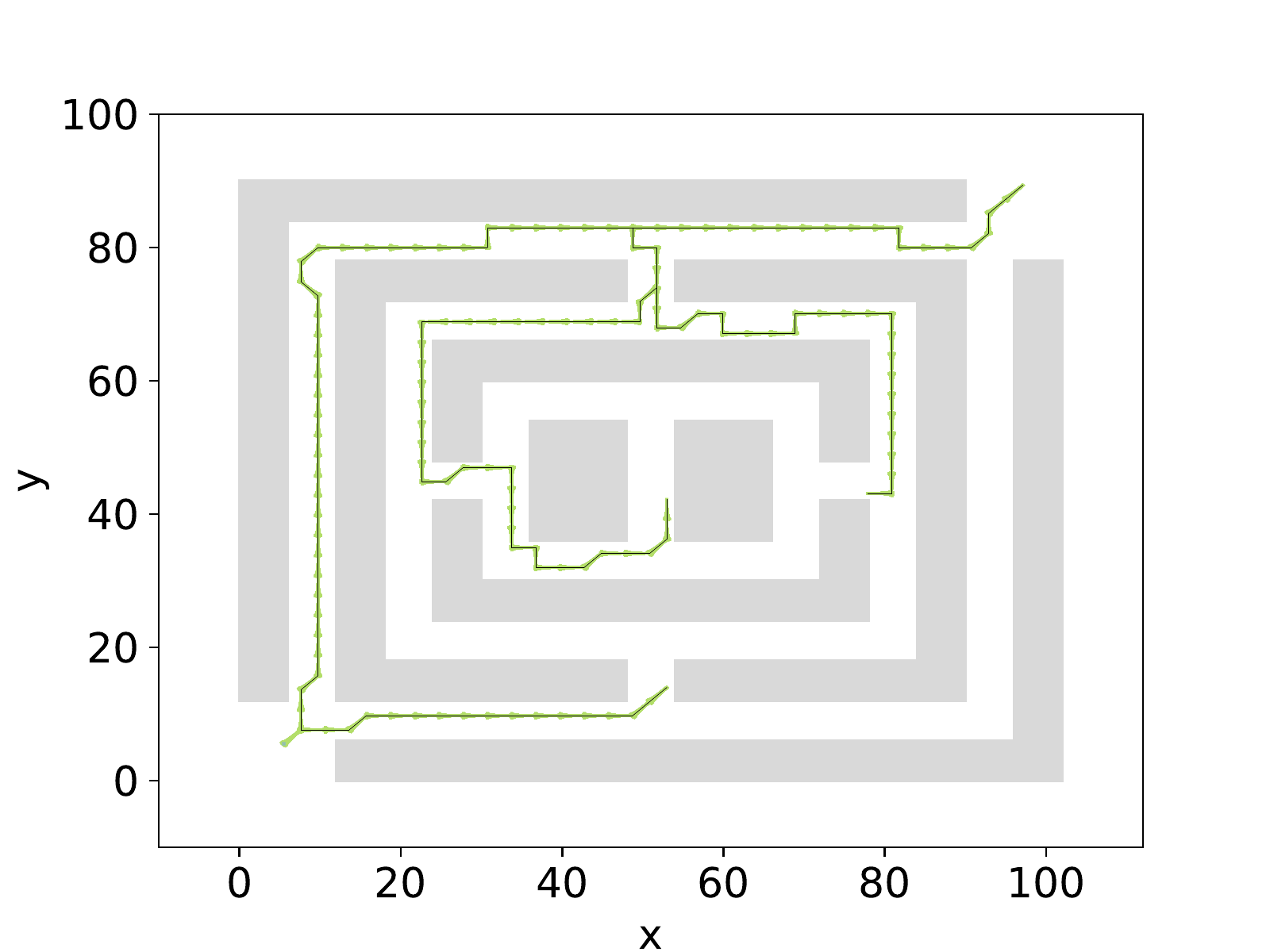}
    \includegraphics[width=0.49\textwidth]{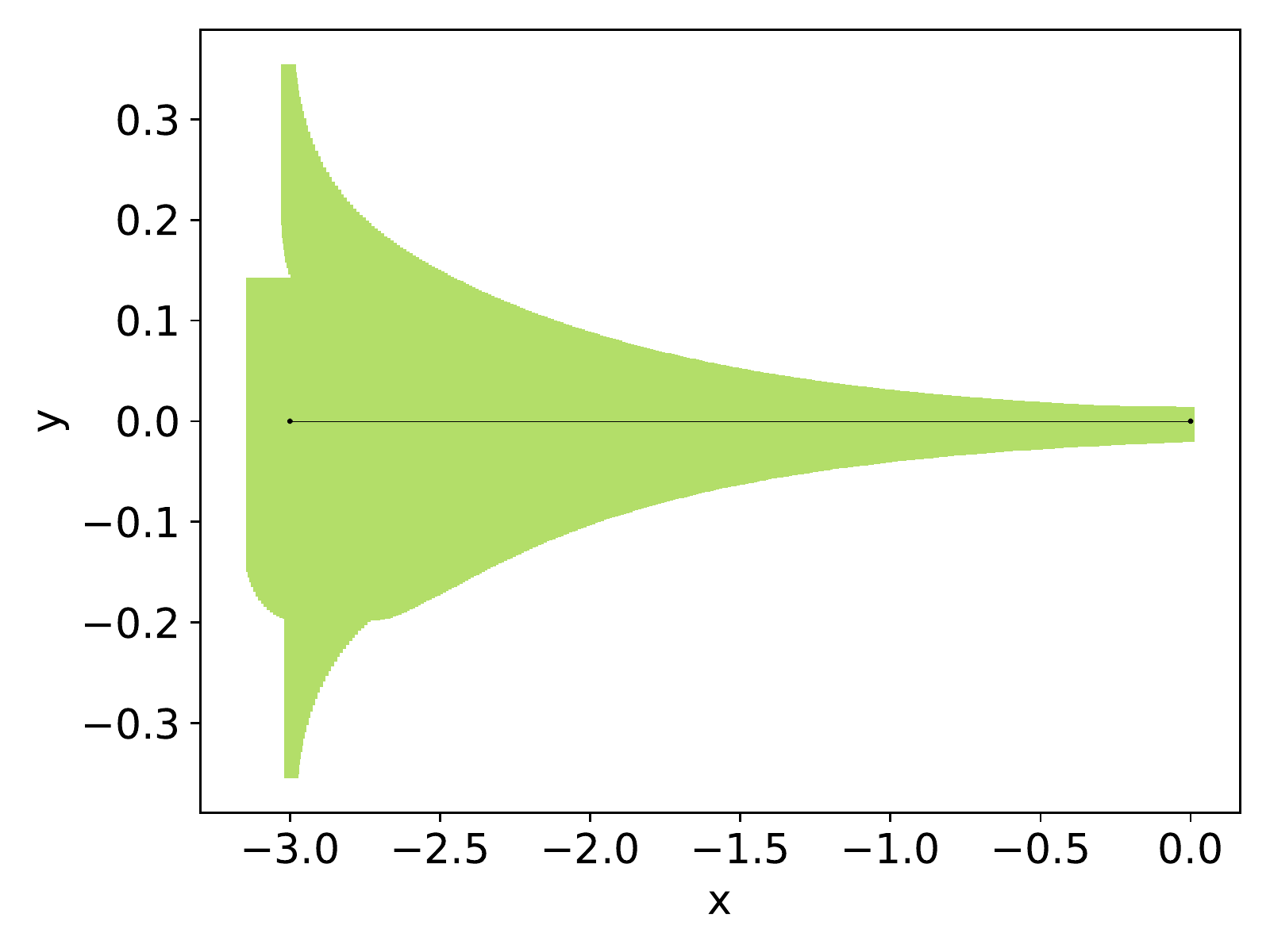}
    \includegraphics[width=0.49\textwidth]{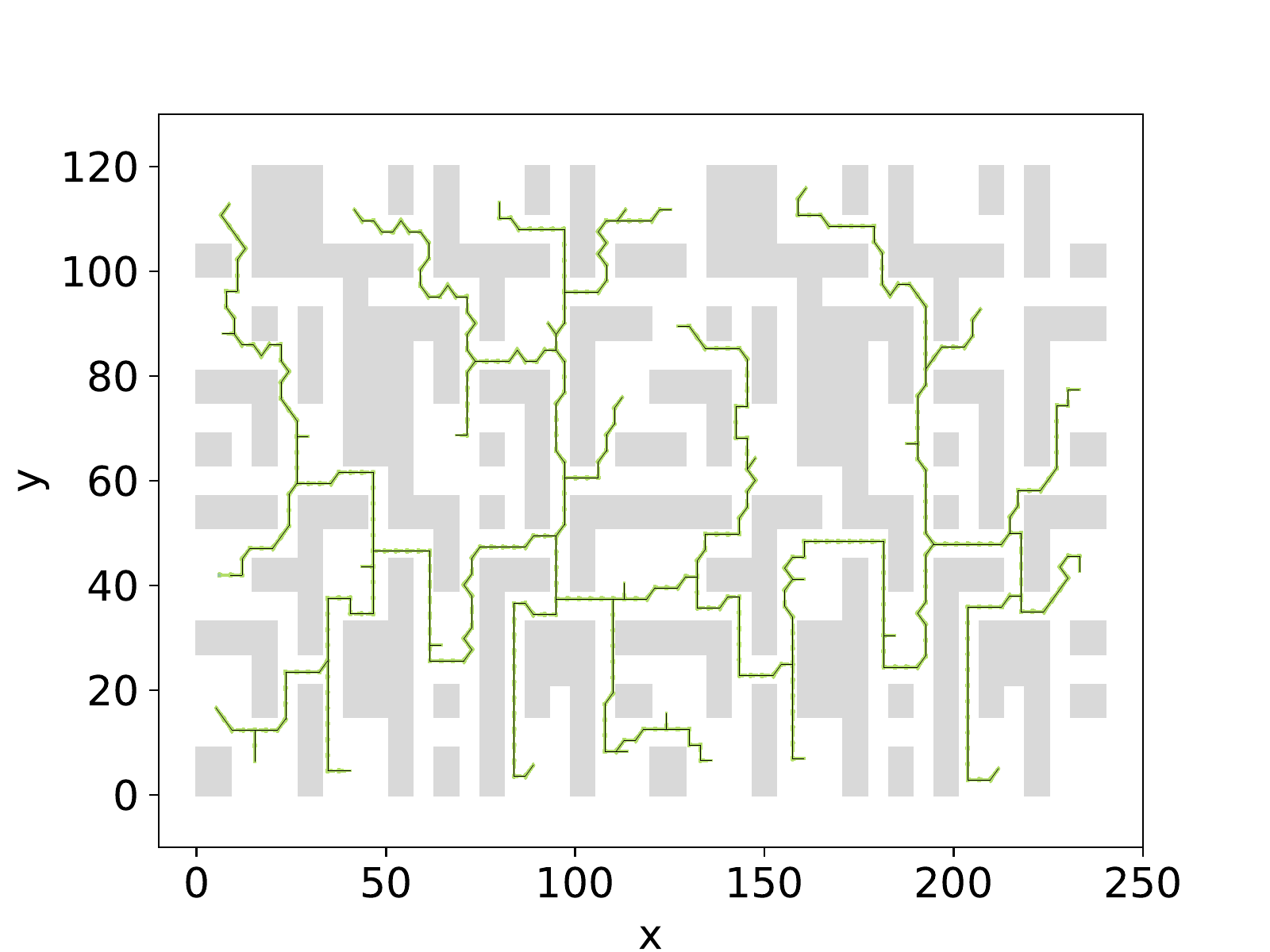}
    \includegraphics[width=0.49\textwidth]{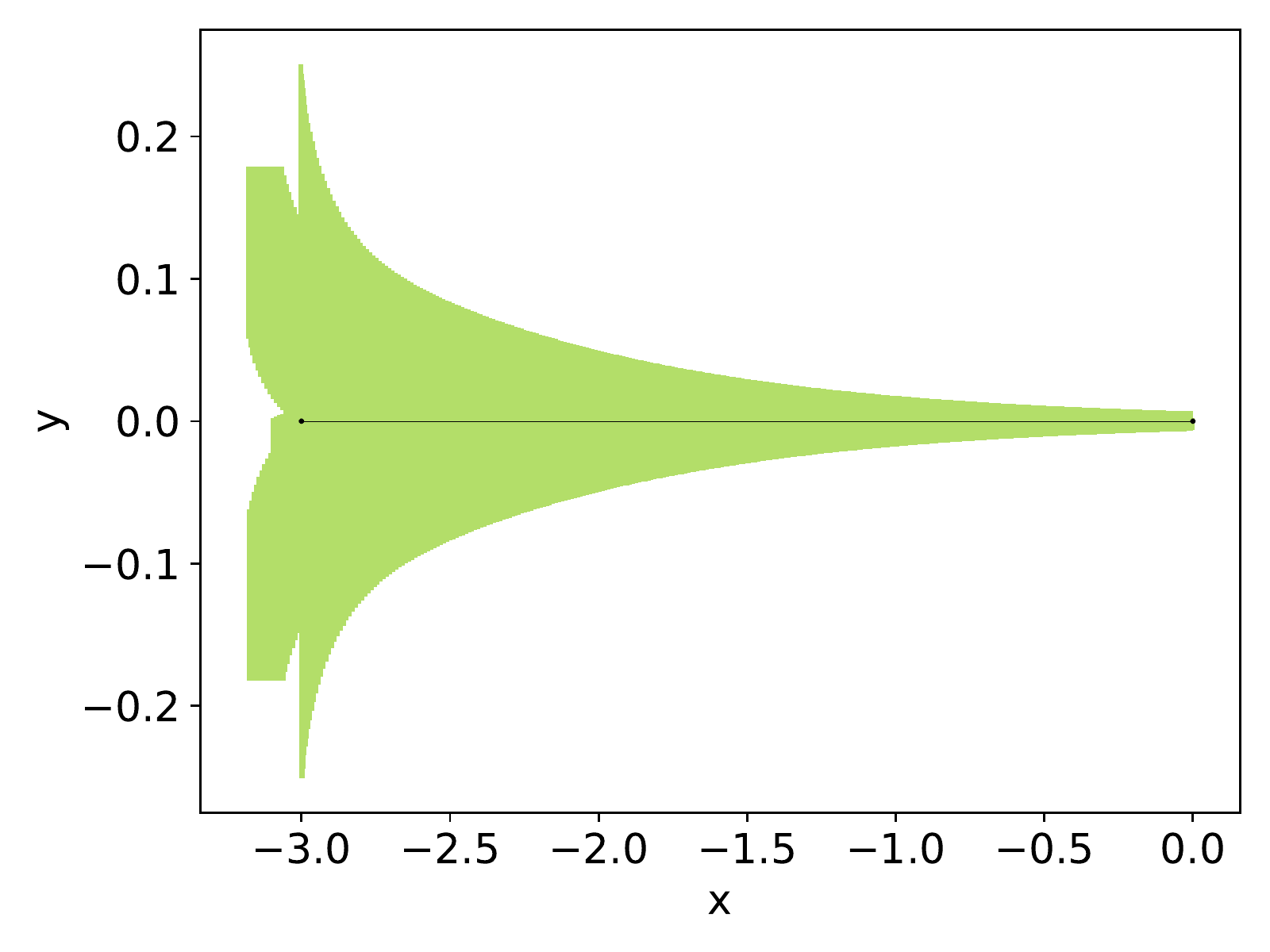}
    \includegraphics[width=0.49\textwidth]{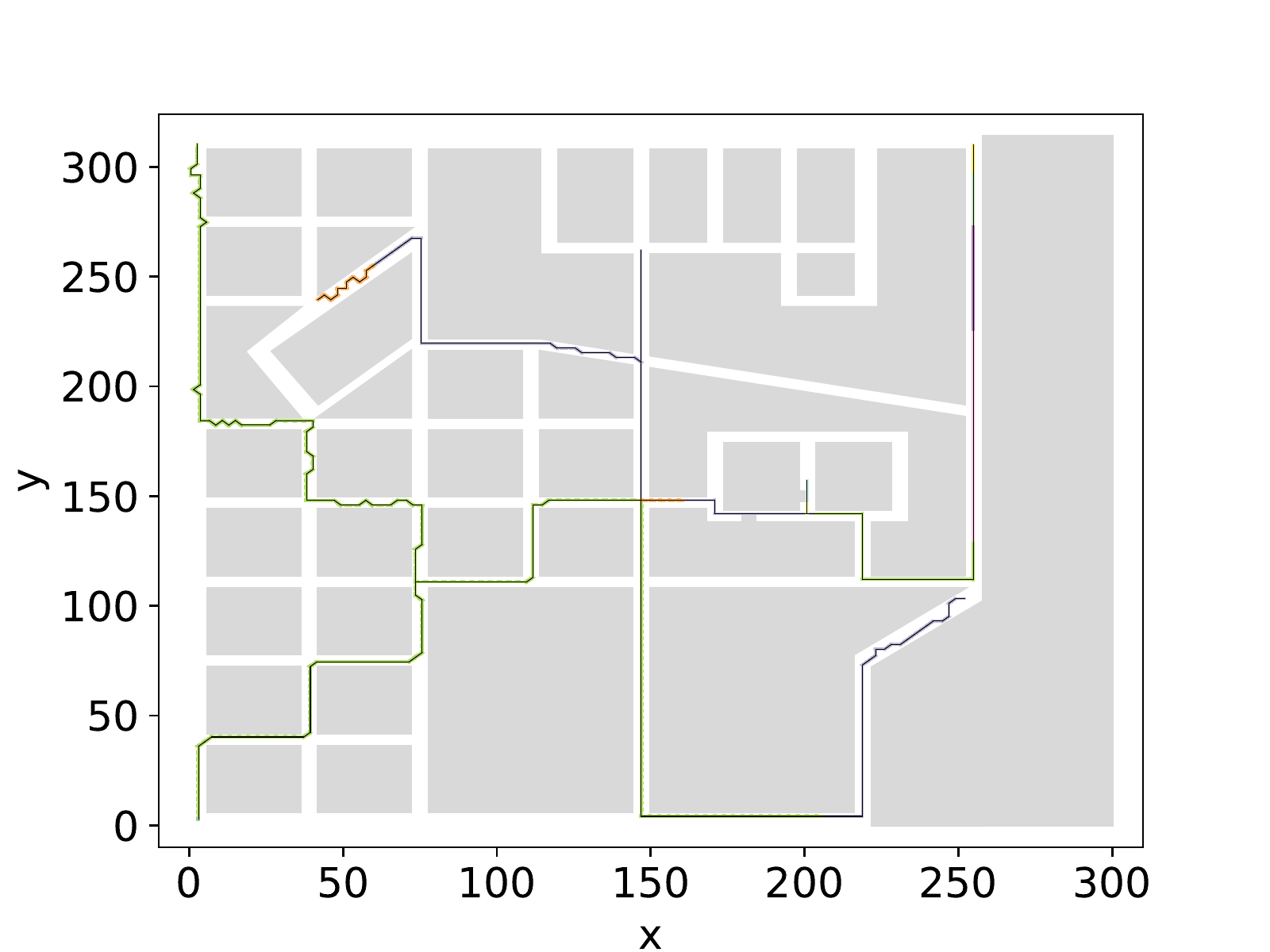}
    \includegraphics[width=0.49\textwidth]{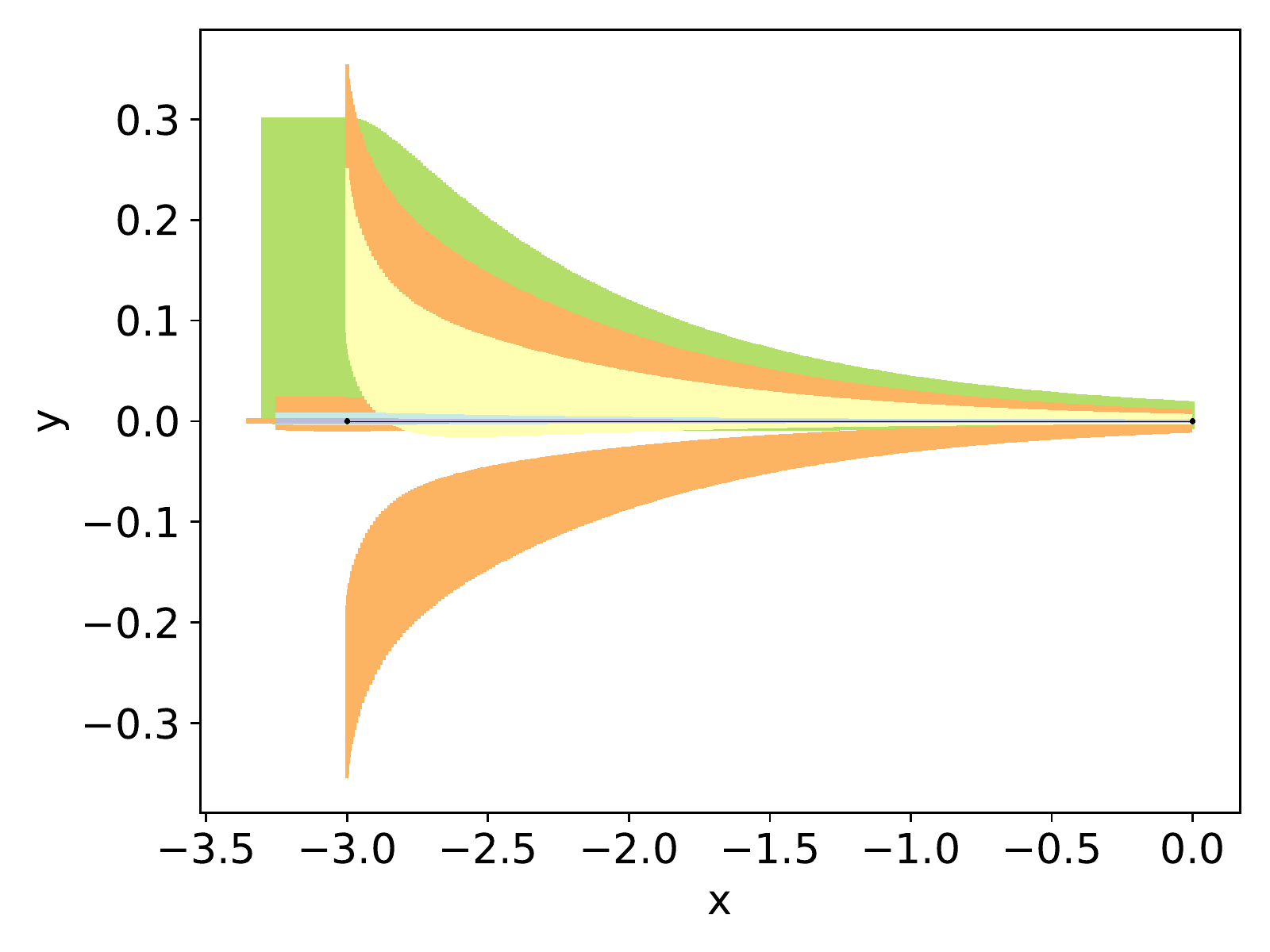}
    \caption{Reachsets of the car with the PD-controller in scenarios S2 (first row), S3 (second row), S4.b (third row). The left column represents the concrete scenario with the car reachsets. The right column represents the reachsets of the abstract automaton. The different colors correspond to different abstract segments (defined in Section~\ref{sec:symmetry_abstractions}). All abstract segments have the same waypoints but represent different segments of the concrete scenario. That means they lead to different reset of the agent's state after mode transitions.}
    \label{fig:car_senarios}
\end{figure}

\begin{figure}[H]
  \begin{center}
    \includegraphics[width=0.49\textwidth]{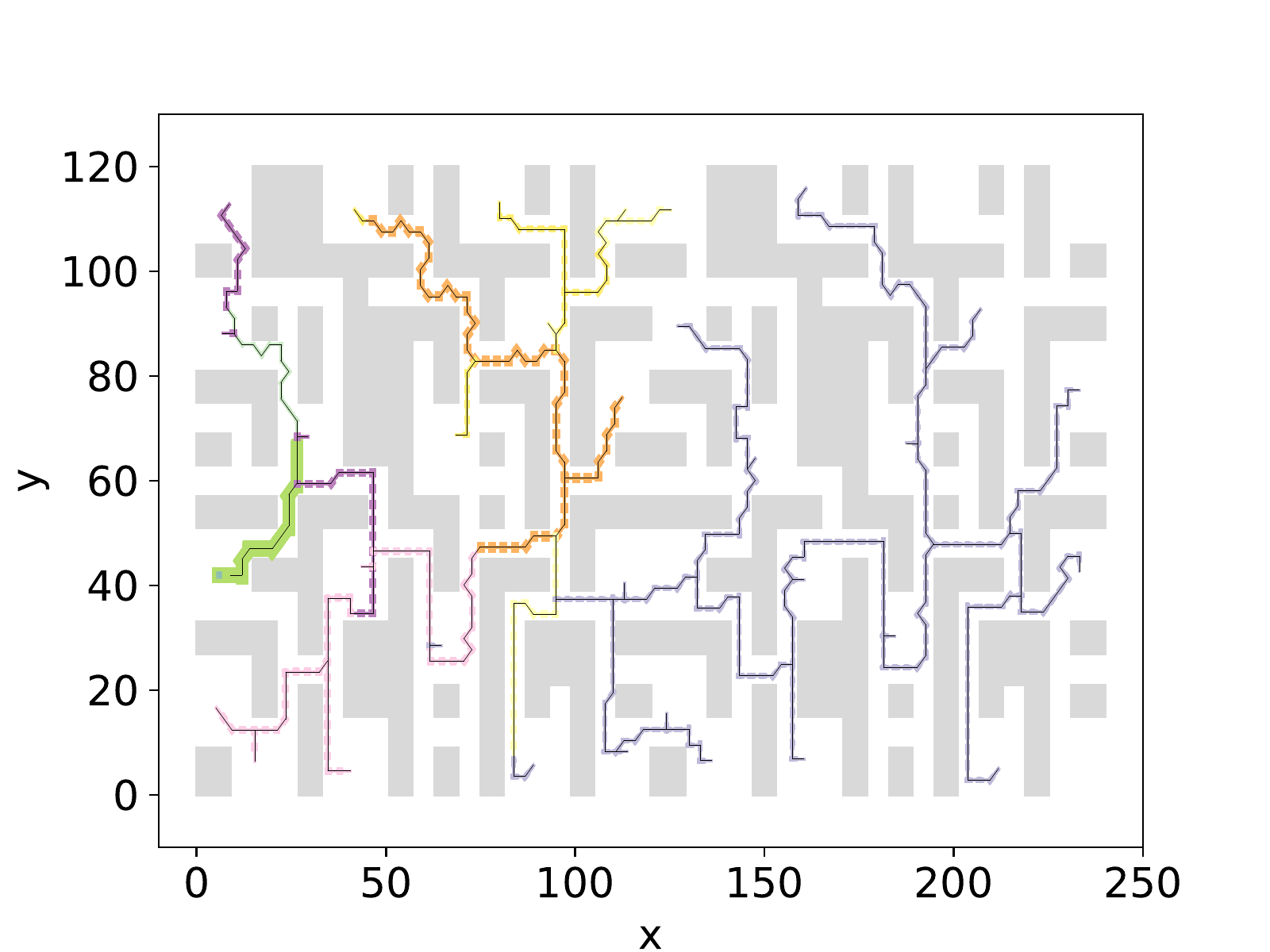}
    \includegraphics[width=0.49\textwidth]{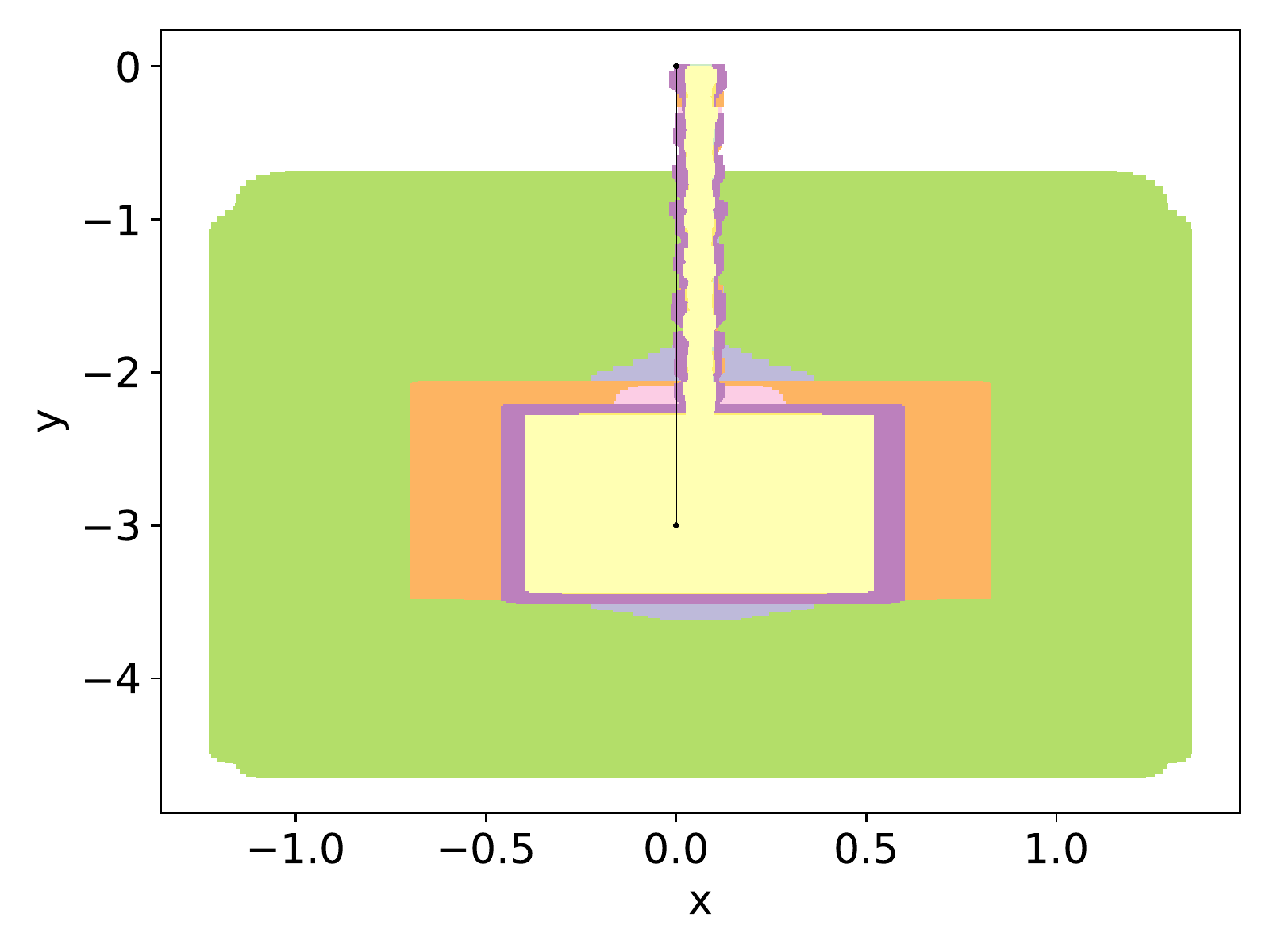}
    \includegraphics[width=0.49\textwidth]{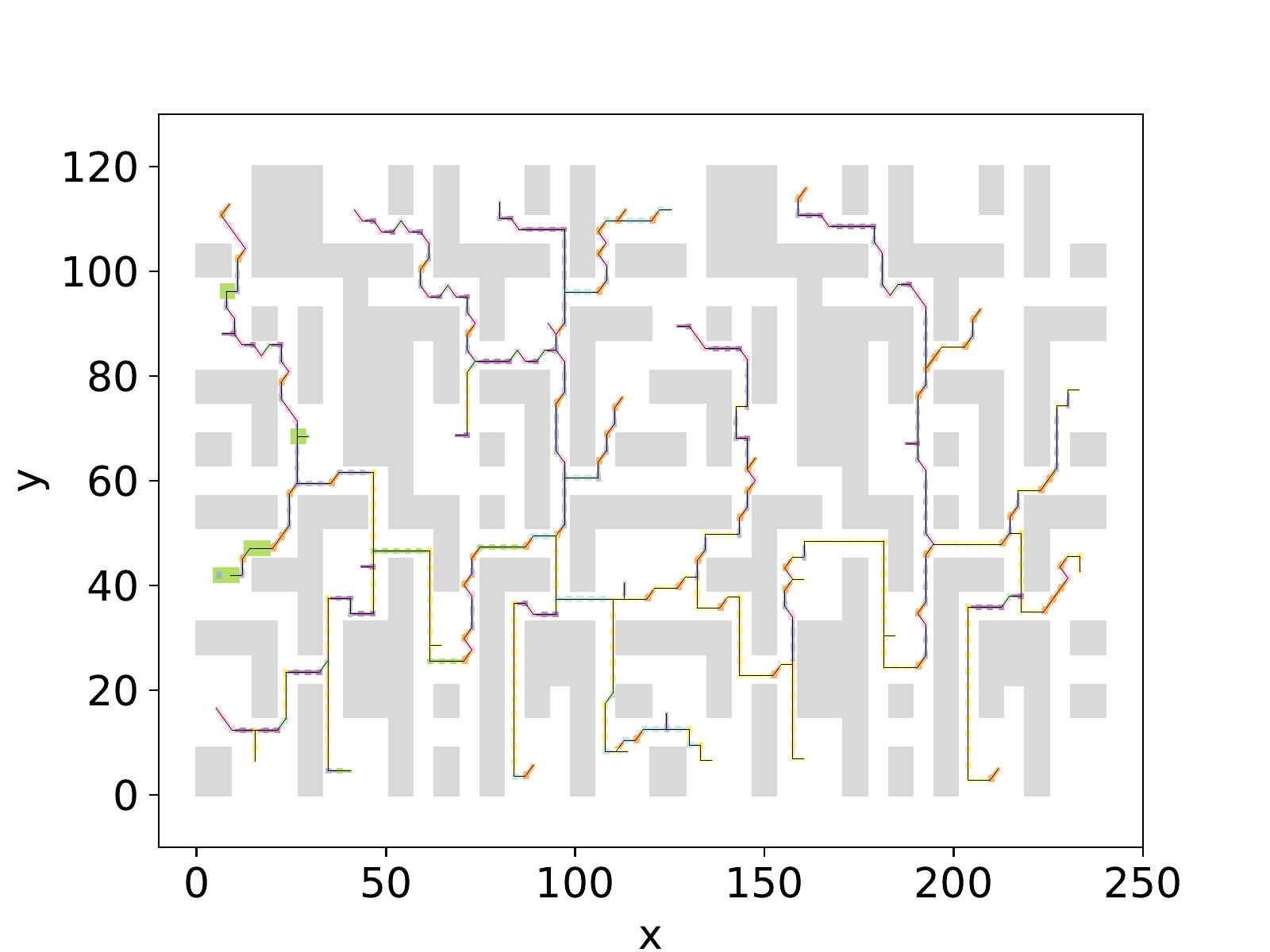}
    \includegraphics[width=0.49\textwidth]{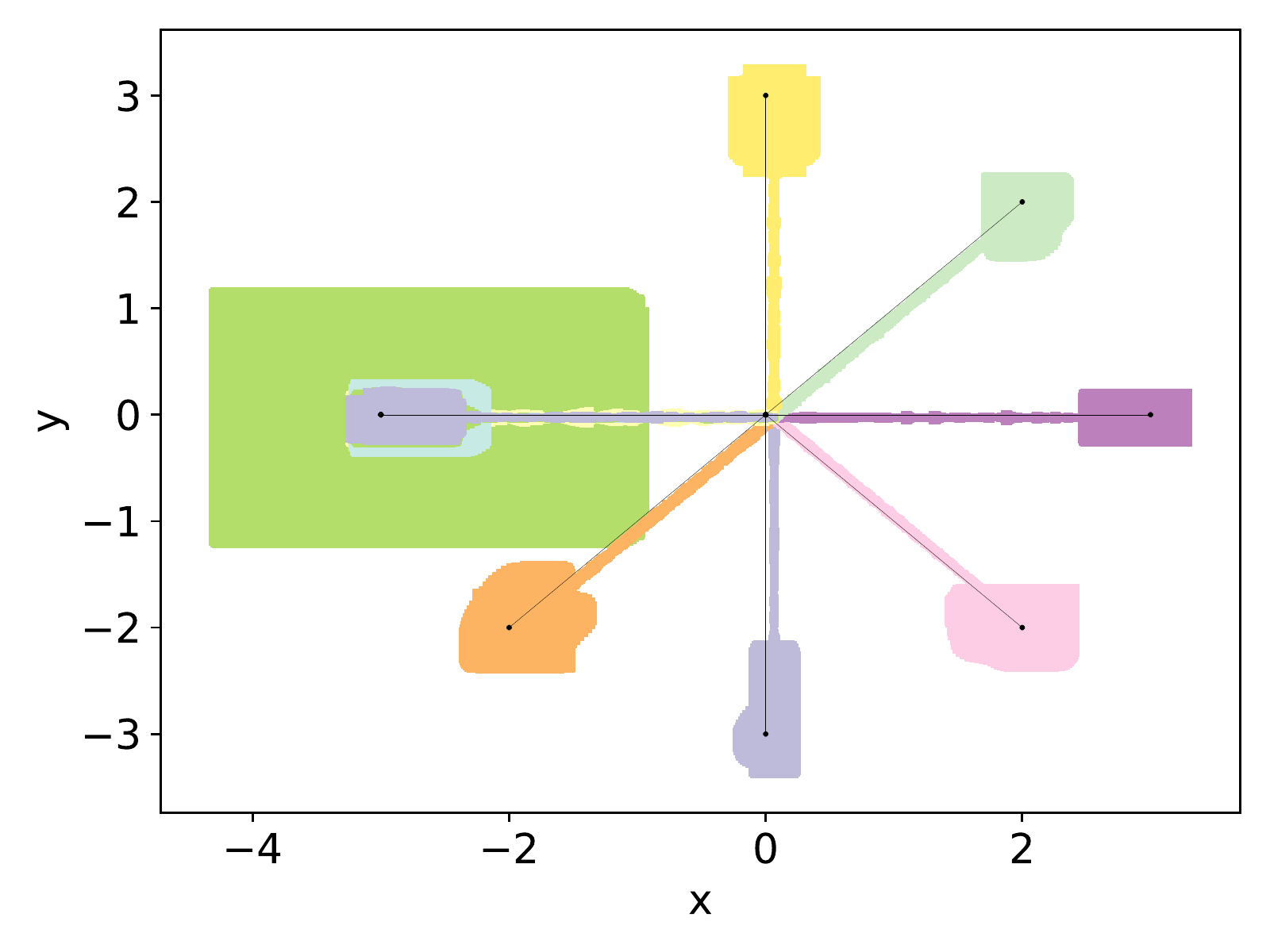}
  \end{center}
  \caption{Reachsets of the quadrotor with the NN-controller while using different $\Phi$s. The first row is the quadrotor's reachset when using $\Phi = $ TR. The second row is when using $\Phi = $ T. The left column represents the concrete scenario with the computed reachsets. The right column represents the reachsets of the abstract automaton.\label{fig:summary} }
  \label{fig:symmery_type}
\end{figure}

\end{document}